\newcommand\gtuppereq{d^2 \log(m/d)}
\newcommand\gtupper{O(\gtuppereq)}
\definecolor{nicepurple}{RGB}{106,27,154}
\definecolor{nicepink}{RGB}{232,82,133}
\definecolor{nicepeach}{RGB}{242,164,159}
\title{Unconstraining graph-constrained group testing
}
\author{Bruce Spang\thanks{Department of Computer Science, Stanford University.  \href{mailto:bspang@cs.stanford.edu}{bspang@cs.stanford.edu}} \and Mary Wootters\thanks{Departments of Computer Science and Electrical Engineering, Stanford University.  \href{mailto:marykw@stanford.edu}{marykw@stanford.edu}}}
\begin{document}

\maketitle

\begin{abstract}
In  network tomography, one goal is to identify a small set of failed links in a network, by sending a few packets through the network and seeing which reach their destination.
This problem can be seen as a variant of combinatorial group testing, which has been studied before under the moniker \em graph-constrained group testing. \em

The main contribution of this work is to show that for most graphs, the ``constraints'' imposed by the underlying network topology are no constraint at all. That is, the number of tests required to identify the failed links in ``graph-constrained'' group testing is near-optimal even for the corresponding group testing problem \em with no graph constraints. \em
Our approach is based on a simple randomized construction of tests; to analyze our construction, we prove new results about the size of giant components in randomly sparsified graphs.

Finally, we provide empirical results which suggest that our connected-subgraph tests perform better not just in theory but also in practice, and in particular perform better on a real-world network topology.

\end{abstract}

\section{Introduction}

Suppose you run a network with $n$ switches and $m$ links between the switches.  Occasionally links will fail, and it is your goal to find and fix them.  In practice, even finding a failure can be non-trivial: often the only available clue is the failure of some traffic which interacts with many links.
In light of this, the problem of \em network tomography \em (see \cite{Castro:2004fj} for a survey) is: what can be learned about a network by observing only its traffic?

In the version of the problem we will study here, suppose that some links---at most $d$ of them---fail.
We may try to send a packet along any connected path through the network, and we observe whether or not the packet reaches its destination. The goal is to identify any set of up to $d$ failed links while sending as few packets as possible.  We focus on the \em non-adaptive \em setting, where the paths for the packets must be fixed ahead of time.  Non-adaptive tests are faster since they allow packets to be sent in parallel, and are easier to implement since these paths will be hard-coded into the switches.

As observed by~\cite{Harvey:2007ez, Cheraghchi:2010vm}, this problem is a variant of a well-studied problem called \em combinatorial group testing. \em Combinatorial group testing, originally motivated by the problem of cheaply testing for disease~\cite{D43}, has been studied since the 1940's and has applications from computational biology to wireless networks.  We refer the reader to \cite{Du:1999by} for a survey.
In the combinatorial group testing problem there are $m$ items, at most $d$ of which are ``defective.'' 
A single test of a subset $T$ of items reveals whether or not there are any defective items in $T$.
The goal is to identify the defective items, by observing the output of a few tests.

The connection to network tomography is as follows: each link is defective if it fails, and each test $T$ corresponds to a set of links. In the network tomography setting, there is one additional requirement: a test $T \subseteq [m]$ must correspond to a path that a packet could take through the network.  Because of this connection, \cite{Cheraghchi:2010vm} called this problem ``graph-constrained group testing.''

\paragraph{Our Question.}
A natural question is whether the additional constraints of graph-constrained group testing necessitate significantly more tests.  For the unconstrained group testing problem, the optimal number of tests is essentially $\gtupper$.\footnote{The optimal number of tests for unconstrained group testing has recently been shown to be $\Theta( d^2 \log_d(m))$: a breakthrough work of D'yachkov et al.~\cite{D14} proved the upper bound, which settled a question that had been open since the lower bound was proved in the early 1980's~\cite{DR82}.  However, we focus on the benchmark $\gtupper$, because it is optimal up to a factor of $\log d$, easy to see that a random set of tests achieves this in the unconstrained setting, and even this benchmark was not previously known in the graph-constrained setting.
} 
Thus, our question is as follows:
\begin{question}\label{q:thequestion}
For what graphs $G$ can we solve the graph-constrained group testing problem using $\gtupper$ tests?
\end{question}

Previous work \cite{Harvey:2007ez} has shown that certain graphs, such as a line, require $\Omega(m)$ tests, far more than the $\gtupper$ we would need for the unconstrained problem. However, 
it is also known that for sufficiently ``well-connected'' graphs 
(for example, those with large minimum cuts, many disjoint spanning trees, or small mixing time), a sublinear number of tests suffice~\cite{Harvey:2007ez,Cheraghchi:2010vm}.  These works have proposed using large subtrees~\cite{Harvey:2007ez} or random walks~\cite{Cheraghchi:2010vm} as the tests.  However, both of these approaches stop short (by polylogarithmic factors or more) of obtaining an $\gtupper$ bound.

\paragraph{Our contributions.}
We improve upon the results of \cite{Harvey:2007ez,Cheraghchi:2010vm} to show that $\gtupper$ tests are sufficient for a wide collection of graphs, including many of the graphs already considered in prior work.
Our construction---which is randomized---is quite simple: we sparsify the graph by choosing edges at random, and use the resulting large connected subgraphs.  
This is similar in flavor to earlier work---for example, \cite{Cheraghchi:2010vm} considered random walks---but our tests lead to stronger theorems, and also appear to perform better in practice. 
Concretely, our contributions are as follows:
\begin{itemize}
\item \textbf{$\gtupper$ tests suffice for $(\beta, \alpha)$-edge expanders.}
Our main result, Theorem~\ref{thm:mainedge}, applies to graphs which are $(\beta, \alpha)$-edge expanders, meaning that every set $S \subseteq V$ of size at most $\beta n$ has at least $\alpha|S|$ edges coming out of it.  We show that $\gtupper$ tests suffice when $\beta$ is constant and $\alpha \gtrsim d$.
Moreover, if $\beta$ is sub-constant, then the number of tests required degrades gracefully with $\beta$.

$(\beta,\alpha)$-edge expansion is a general notion, and our results imply improved graph-constrained group-testing schemes
for several natural classes of graphs like Erd\"{o}s-R\'{e}nyi graphs and constant-degree expanders.  Moreover, our results are even optimal when applied to certain ``counter-example'' graphs like the barbell graph which foil earlier work. 

Our general theorem (Theorem~\ref{thm:mainedge}) is compared to existing general theorems in Table~\ref{table:num-tests}.  The results for a few specific families of graphs are shown in Table~\ref{tab:specific}.  These results are presented in more detail in Section~\ref{sec:results}.

\begin{table}
\renewcommand{\arraystretch}{1.4}
  \centering
  \begin{tabular}{|c|c|c|c|}
    \hline
    \textbf{Source} & \textbf{Graph} & \textbf{Max. defective edges}                       & \textbf{Number of Tests}                          \\ \hline\hline
    \hspace{1sp}\cite{Harvey:2007ez} & $G$ has min-cut $K$ & $d \leq \ceil{\frac{K-1}{2}}-1$ & $O(d^3\log_d(m))$                        \\
	\hline
    \hspace{1sp}\cite{Cheraghchi:2010vm} & \begin{minipage}{4cm}\begin{center} \vspace{.3cm} $G$ is $D$-regular, \\ with mixing time $\tau$ \vspace{.2cm} \end{center} \end{minipage} & \begin{minipage}{3cm}\begin{center} \vspace{.3cm} $d \leq d_0$ for some \\ $d_0 = \Omega(D/\tau^2)$ \vspace{.2cm} \end{center} \end{minipage} & $O(\tau^2\gtuppereq)$                     \\ 
	\hline
    Proposition~\ref{thm:cutedge} & $G$ has min-cut $K$ & $d \leq \frac{K}{5\log n}$       & $\gtupper$             \\
	\hline
    Theorem~\ref{thm:mainedge} & $G$ is a $(\beta,\alpha)$-edge expander & $d \leq \inparen{ \frac{1}{2} + 0.01 }\alpha$       & $O\left(\frac{\gtuppereq}{\beta}\right)$              \\ \hline
  \end{tabular}
  \caption{Summary of general results using connected-subgraph tests to identify any $d$ defective edges for ``well-connected'' graphs with $n$ vertices and $m$ edges, for various notions of ``well-connected.''
See discussion in Section~\ref{sec:related} for slightly more general statements of results in previous work.  } 
  \label{table:num-tests}
\end{table}
\begin{table}
\renewcommand{\arraystretch}{1.4}
\centering
\begin{tabular}{|p{4cm}|c|c|c|}
\hline
\textbf{Graph} & \textbf{Source} & \textbf{Number of tests required} & \begin{minipage}{4cm} \begin{center} \vspace{.5cm} \textbf{Limit $d_0$ so that recovery of $d \leq d_0$ failures is possible}  \vspace{.5cm} \end{center}\end{minipage}\\\hline\hline
  \nameref{par:complete}
      & \cite{Cheraghchi:2010vm} & \cellcolor{green!15} $\gtupper$ &\cellcolor{green!15}  $d_0 = \Omega(n)$ \\\cline{2-4}
      & This work & \cellcolor{green!15} $\gtupper$ &\cellcolor{green!15}  $d_0 =  \Omega(n)$ \\ \hline\hline
  \nameref{par:expanders}
               & \cite{Harvey:2007ez} & $O( d^3 \log_d(m) )$ &\cellcolor{green!15}  $d_0 = \Omega(D)$ \\\cline{2-4}
      & \cite{Cheraghchi:2010vm} & $O( d^2 \log^3(m) )$ & $d_0 = \Omega(D/\log^2(n))$  \\\cline{2-4}
      & This work & \cellcolor{green!15} $\gtupper$ &\cellcolor{green!15}  $d_0 = \Omega(D)$ \\\hline\hline
  \nameref{par:erdos-renyi} $G(n,D/n)$
      & \cite{Cheraghchi:2010vm} & $O( d^2 \log^3(m) )$ & $d_0 = \Omega(D/\log^2(n))$ \\\cline{2-4}
      & This work & \cellcolor{green!15} $\gtupper$ &\cellcolor{green!15}  $d_0 = \Omega(D)$ \\\hline\hline
  \nameref{par:barbells}
      & \cite{Harvey:2007ez} & $O(d^3 \log_d m)$ & $d_0 = 1$ \\\cline{2-4}
      & \cite{Cheraghchi:2010vm} & $m$ \hyperref[par:barbells]{(see discussion)} & \cellcolor{green!15} $d_0 = n$ \\\cline{2-4}
      & This work & \cellcolor{green!15} $\gtupper$ & \cellcolor{green!15} $d_0 = \Omega(n)$ \\\hline
\end{tabular}
\caption{Summary of work on the number of connected-subgraph tests required to identify any $d \leq d_0$ failures, for specific families of graphs.  All graphs have $n$ vertices and $m$ edges. Results which meet the near-optimal $\gtupper$ tests or which have only the asymptotically optimal restriction $d \leq d_0$ for some $d_0 = \Omega(\text{degree})$ are highlighted in green.}
\label{tab:specific}
\end{table}

\item \textbf{New results about large connected components of random graphs.}
While our construction is quite simple, the analysis requires some delicacy.  In order to show that our tests work, we prove new results about giant components in randomly sparsified graphs.

More precisely, 
our main technical theorem (Theorem~\ref{thm:giantcomponent}) establishes the following.  Suppose that $G = (V,E)$ is a $(\beta, \alpha)$-edge expander, and let $G(p) = G(V,E')$ be the graph where $E' \subseteq E$ is a random subset where each edge is kept independently with probability $p$.  We show that if $p \geq (1 + \eps)/\alpha$, then for any edge $e \in E$, with probability $\Omega(p\eps)$ then not only does $e$ survive, but also $e$'s connected component in $G(p)$ has size at least $\beta n$.  This is of a similar flavor to previous work on giant components of randomly sparsified graphs, but with two important differences: first, our result works even if $\beta$ is small (so the components are ``big'' but not ``giant'') and second, we require that any edge $e$ be contained in a large component with decent probability after sparsification.  Theorem~\ref{thm:giantcomponent} is stated and proved in Section~\ref{sec:proofs}.

\item 
  Finally, we present empirical results which suggest that our approach significantly out-performs the random-walk method of \cite{Cheraghchi:2010vm} and in many cases, nearly matches the performance of unconstrained random tests. On complete graphs and hypercubes it uses less than half the tests of the random-walk method. On a family of graphs often used in datacenter networks (the ``fat-tree'' topology), our approach is able to find defectives using a nontrivial number of tests while the random-walk method is not.
\end{itemize}

\paragraph{Organization.}
In Section~\ref{sec:prelim}, we formally set up the problem.  In Section~\ref{sec:related}, we survey related work, and we state our theoretical results in Section~\ref{sec:results}.  The proofs of these results follow in Section~\ref{sec:proofs}.  Finally, we present our empirical results in Section~\ref{sec:empirical}.

\section{Setup and Preliminaries}\label{sec:prelim}
We begin with some basic notation and definitions. 

\paragraph{Graph-theoretic preliminaries.}
Throughout, we will be working with undirected, unweighted graphs $G = (V,E)$ with $|V| = n$, $|E| = m$.
For a set of vertices $A \subseteq V$, the \em boundary \em of $A$ is
\[ \partial A = \inset { \{u,v\} \in E \suchthat u \in A, v \not\in A }. \]
For a set of edges $B \subseteq E$, we use the notation $N(B)$ to denote the set of vertices $v$ that are endpoints of an edge in $B$:
\[ N(B) = \inset{ v \in V \suchthat \exists u, \{u,v\} \in B }. \]

The minimum cut $K$ of a graph $G$ is defined by $K = \min_{A \subseteq V} |\partial A|.$
Our main theorem is about \em edge expanders. \em  We give a slightly more general definition than the usual notion (which would have $\beta = 1/2$ below), so that we can state a more general theorem.
\begin{definition}
A graph $G = (V,E)$ is a $(\beta, \alpha)$-edge expander if for all sets $A \subseteq V$ with $|A| \leq \beta |V|$, $|\partial A| \geq \alpha |A|$.
\end{definition}

We will consider random sparsifications of graphs.  For a graph $G = (V,E)$ and $p \in (0,1)$, $G(p) = (V,E')$ denotes the random graph where $E' \subseteq E$ is generated by including each edge of $E$ in $E'$ independently with probability $p$.  We use $G(n,p) = K_n(p)$ to denote the Erd\"{o}s-R\'{e}nyi graph where each edge is included independently with probability $p$.  (Here, $K_n$ denotes the complete graph on $n$ vertices).

\paragraph{Group testing preliminaries.}
The combinatorial group testing problem is set up as follows (using slightly non-standard notation in order to be consistent with the graph-constrained set-up below).  Let $E$ be a universe of size $m$, and suppose that $B \subseteq E$ is a set of at most $d$ special or ``defective'' items in $E$.
A \em test \em $T \subseteq E$ is a collection of items, and we say that the \em outcome \em of the test $T$ is \textsc{True} if $T \cap B \neq \emptyset$ and \textsc{False}  otherwise.
We say that a collection of tests $\mathcal{T} \subseteq 2^E$
(here, $2^E$ denotes the power set of $E$, consisting of all subsets of $E$) 
\em can identify up to $d$ defective items in $E$ \em  if for any $B \subseteq E$ with $|B| \leq d$, $B$ is uniquely determined from the outcomes of the tests $T \in \mathcal{T}$.  The goal is to design a collection of tests $\mathcal{T} \subseteq 2^E$ which can identify up to $d$ defective items, so that $|\mathcal{T}|$ is as small as possible.
A useful notion in the group testing literature is \em disjunctness, \em  which is a sufficient condition for recovery.
\begin{definition}
Let $E$ be a universe and $\mathcal{T} \subseteq 2^E$.  We say that $\mathcal{T}$ is \em $d$-disjunct \em if for all $e \in E$, for all $B \subseteq E$ where $|B| \leq d$ and $e \not\in B$, there exists a test $T \in \mathcal{T}$ so that $e \in T$ and $B \cap T = \emptyset$.
\end{definition}
If $\mathcal{T}$ is $d$-disjunct, then $\mathcal{T}$ can identify up to $d$ defective items in $E$.  More precisely, it is not hard to see that the following algorithm will do the job: for each item $e \in E$, declare $e \in B$ if and only if all the tests $T \in \mathcal{T}$ with $e \in T$ had outcome \textsc{True}. 

Choosing tests completely at random is a good way to obtain $d$-disjunct sets.
\begin{prop}[See, e.g., \hspace{1sp}\cite{Du:1999by} Theorem 8.1.3] \label{prop:gt-upper}
Let $d \geq 1$.
Let $E$ be a universe. Consider a random test $T \subseteq E$ such that each $e \in E$ is included in $T$ independently with probability $p = \frac{1}{d+1}$.
Let $\mathcal{T} = \{T_1, \ldots, T_\tau\}$, where each $T_i \in \mathcal{T}$ is chosen independently from the above distribution.  Then there is a value $\tau = \gtupper$ so that $\mathcal{T}$ is $d$-disjunct with probability at least $1 - 1/m$.
\end{prop}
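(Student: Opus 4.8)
The plan is to prove the stronger claim that $\mathcal{T}$ is $d$-disjunct with probability at least $1 - 1/m$; by the observation immediately preceding the proposition, $d$-disjunctness implies that $\mathcal{T}$ can identify up to $d$ defectives, so this suffices. Disjunctness fails precisely when there exist some $e \in E$ and some $B \subseteq E \setminus \{e\}$ with $|B| \le d$ such that every test $T_i$ containing $e$ also meets $B$. I would bound the probability of this bad event by a union bound over all such pairs $(e,B)$, of which there are at most $m \cdot \binom{m}{\le d} \le m\,(em/d)^d$.

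First I would fix a pair $(e,B)$ with $|B| = d$ — this is the worst case, since shrinking $B$ only makes it easier to separate from $e$ — and compute, for a single random test $T$ drawn from the given distribution, the probability that $T$ ``witnesses'' $(e,B)$, meaning $e \in T$ and $B \cap T = \emptyset$. Because $e \notin B$, the events $\{e \in T\}$ and $\{B \cap T = \emptyset\}$ are independent, so this probability equals $p(1-p)^d$. Plugging in $p = \frac{1}{d+1}$ — which, not coincidentally, is the maximizer of $p(1-p)^d$ — I would lower bound this witnessing probability by $q := \frac{1}{d+1}\bigl(1 - \frac{1}{d+1}\bigr)^d \ge \frac{1}{4(d+1)}$, using that $(1 - 1/n)^n$ is increasing in $n$ and already at least $1/4$ at $n = 2$ (i.e.\ at $d = 1$).

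Next, since the $\tau$ tests are drawn independently, the probability that \emph{none} of them witnesses a fixed pair $(e,B)$ is $(1-q)^\tau \le e^{-q\tau} \le e^{-\tau/(4(d+1))}$. Combining this with the union bound, $\mathcal{T}$ fails to be $d$-disjunct with probability at most $m\,(em/d)^d\, e^{-\tau/(4(d+1))}$. Taking logarithms, this is at most $1/m$ as soon as $\tau \ge 4(d+1)\bigl(2\ln m + d\ln(em/d)\bigr)$, and I would close the argument by checking that the right-hand side is $\gtupper$, so that such a $\tau$ exists of the claimed size.

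The whole argument is a textbook first-moment computation and presents no real obstacle; the only step demanding a little care is the final bookkeeping — verifying $4(d+1)\bigl(2\ln m + d\ln(em/d)\bigr) = \gtupper$ — where one uses $\ln m \le \ln(m/d) + \ln d$ together with $\ln(em/d) = 1 + \ln(m/d)$ to absorb the lower-order contributions into $d^2\log(m/d)$ (in the regime $d \le m/2$, which is the one of interest).
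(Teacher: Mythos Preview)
Your argument is correct and is the standard probabilistic proof of this fact. The paper does not actually prove this proposition; it simply cites it as a known result (Theorem 8.1.3 of \cite{Du:1999by}), so there is nothing to compare against. That said, the paper does deploy essentially the same first-moment computation---union bound over the $m\binom{m}{d}$ pairs $(e,B)$, combined with the per-test witnessing probability---in the proof of Theorem~\ref{thm:mainedge}, so your approach is entirely in keeping with the paper's methods.

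One tiny quibble: when you invoke ``$(1-1/n)^n$ is increasing and at least $1/4$ at $n=2$,'' note that the expression you actually need to lower-bound is $(1-1/(d+1))^d = (1-1/n)^{n-1}$ with $n=d+1$, not $(1-1/n)^n$. This is harmless, since $(1-1/n)^{n-1} \ge (1-1/n)^n \ge 1/4$ for $n\ge 2$, but the justification as written is off by one in the exponent.
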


This is nearly optimal, up to a factor of $O(1/\log d)$:
\begin{theorem}[\cite{DR82}] \label{thm:gt-lower}
  Let $E$ be a universe of size $m$ and $\mathcal{T} \subseteq 2^E$. If $\mathcal{T}$ is $d$-disjunct, then $|\mathcal{T}| = \Omega(d^2\log_d m)$
\end{theorem}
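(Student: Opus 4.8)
}
This is the classical lower bound of D'yachkov and Rykov~\cite{DR82}; here is the route I would take. First pass to the incidence-matrix picture: set $t = |\mathcal T|$ and let $A_1, \dots, A_m \subseteq [t]$ be the columns, where $A_j$ is the set of tests containing item $j$. In this language, $d$-disjunctness says exactly that no column is contained in the union of any $d$ of the others. Two easy consequences worth recording at the outset: the $A_j$ are distinct, nonempty, and form an antichain (take the forbidden union to be a single other column), and the minimum number of other columns needed to cover any $A_j$ is at least $d+1$ (take the forbidden union to have size $d$). It is convenient to assume $m$ is large relative to $d$ (the degenerate case is handled by the observation that a small universe forces near-singleton tests) and that $t < d^2\log_d m$, since otherwise there is nothing to prove. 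Taking logarithms, it then suffices to show that every $d$-disjunct $t\times m$ matrix has $m \le 2^{O((t\log d)/d^2)}$.

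The plan has two stages. Stage one is the ``no-logarithm'' packing bound $t = \Omega(d^2)$: picking any $d+2$ columns and applying $d$-disjunctness to each column against the remaining $d$ of them yields, for each ordered pair $(a,b)$, a test $r_{a,b}\in A_a$ that avoids all the other columns; the restriction of $r_{a,b}$ to the $d+2$ distinguished columns is either $e_a$ or $e_a+e_b$, and a short case analysis shows these realize $\Omega(\binom{d+2}{2})$ distinct patterns. A refinement of the same counting shows that a large collection of columns of small weight is impossible, because each such column must contain tests appearing in very few columns. Stage two, which produces the $\log_d m$ factor, is a counting argument in the style of D'yachkov--Rykov (with cleaner treatments due to Ruszink\'o and F\"uredi). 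Here I would (i) bucket the columns into the $O(\log t)$ dyadic weight classes and keep the largest, losing only a negligible $\log t$ factor in $m$, so that all columns now have weight within a constant factor of a common value $w$ (with $w$ in a favorable range after stage one), and then (ii) attach to each column $A_j$ a ``signature'' $\sigma(A_j)\subseteq A_j$, a set of only $O(t/d^2)$ tests with the property that no column's signature is absorbed by the union of $d$ other columns of comparable weight. Distinct columns then have distinct signatures (otherwise one column's signature would sit inside another single column, contradicting the property), so $m$ is at most the number of such sets, which is $\binom{t}{O(t/d^2)} \le 2^{O((t\log d)/d^2)}$, exactly as needed.

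The step I expect to be the main obstacle is constructing the signatures in stage two, which is the real content of~\cite{DR82}. The trouble is that $d$-disjunctness permits two columns to agree on all but a single test, so one cannot simply declare columns in a weight class to be pairwise almost-disjoint; instead the signature must be carved out adaptively --- roughly, for each column one greedily collects an $\Omega(1/d)$-fraction of its support consisting of tests that are \emph{not} swallowed by the most frequently occurring rows or by the few other columns overlapping it the most --- and one then argues that $d$ columns cannot jointly cover such a set. Making the parameters and constants line up is the delicate part; a convenient alternative to an explicit greedy construction is to average the analysis over a uniformly random set of $\Theta(t/d)$ tests and study the restricted family. The remaining ingredients --- the matrix reformulation, the $\Omega(d^2)$ packing bound, the dyadic reduction, and the concluding arithmetic --- are routine.
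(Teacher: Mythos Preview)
The paper does not prove Theorem~\ref{thm:gt-lower} at all: it is stated with the citation \cite{DR82} and used only as a benchmark, so there is no ``paper's own proof'' to compare against. Your sketch is a reasonable outline of the classical D'yachkov--Rykov argument (and of the later simplifications by Ruszink\'o and F\"uredi), but since the paper treats this as a black-box cited result, no proof is expected here.
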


\paragraph{Graph-constrained group testing.}
Given a graph $G = (V,E)$,
the graph-constrained group testing problem on $G$ is the same as the standard group testing problem on a universe of items $E$, with the additional constraint that each test $T \subseteq E$ be a connected set.  That is, in the network tomography application, a test $T$ must be able to be traversable by a packet.
We say that a \em connected-subgraph test \em is a set of edges $T \subseteq E$ so that $(N(T), T)$ is a connected set.  

We note that the definition of disjunctness directly applies to the graph-constrained setting, and our goal in this work will be to design $d$-disjunct collections $\mathcal{T}$ of connected-subgraph tests, such that $\mathcal{T}$ is as small as possible.

\begin{remark}[Why connected-subgraph tests?]
Our work, like existing work on graph-constrained group testing \cite{Harvey:2007ez,Cheraghchi:2010vm,Karbasi:2012gd}, uses connected-subgraph tests. Connected-subgraph tests can be implemented in a programmable network \cite{Bosshart:2014dr} by hard-coding routing for test packets, or by using source routing.
One could also imagine restricting the tests to be, for example, simple paths or trees.
Some restrictions on tests do have some advantages in implementation (in particular, tests which are \em shortest \em paths may be easier to implement than general connected-subgraph tests: for example many networks support Equal-cost multi-path routing (ECMP) which splits traffic across all the shortest paths between a pair of hosts).
However, connected subgraph tests are strictly more powerful than either simple paths or trees for the constrained group-testing problem.  
We provide some examples which demonstrate this in Appendix~\ref{app:separation}.
\end{remark}

\begin{remark}[Why worst-case failures?]\label{remark:random}
We focus on the worst-case failure model as it is the most conservative and lines up with previous work on graph-constrained group-testing.  However, we remark that if the $d$ failures occur uniformly at random, our proof still goes through and shows that the performance of our connected-subgraph tests are roughly the same as performance of unconstrained random tests.  This would imply that our approach can identify $d$ random failures with high probability on $(\beta, \alpha)$-edge expanders, provided that $\alpha \gtrsim d$.
\end{remark}

\section{Related Work}\label{sec:related}

\paragraph{Boolean network tomography.} Most of the work on \em boolean network tomography \em (that is, the problem of identifying failures in a graph using end-to-end traffic) has a much harser set-up than the one we consider here, in that both the graph and the tests are taken to be worst-case, or at least very constrained. For example, all the tests may be required to be simple paths starting from a particular vertex.  The reason for the harsh set-up is that historically, networks have been quite inflexible, which severely limits both the graph topologies and the sorts of tests that are used.    
Since identifying failures uniquely is often impossible in these settings, this work has focused on doing as well as possible given the circumstances, for example by finding \em any \em set of failures that will explain the test outcomes \cite{Bejerano:2003io, Dhamdhere:2007he} or by finding the most likely set of failures given some underlying distribution~\cite{Duffield:2006ue,Nguyen:2007ds}.
When the input graph and set of allowed tests is worst-case, these problems are hard, and the usual approach is to reduce to some NP-hard problem and use a heuristic or approximation algorithm.

\paragraph{Graph-constrained group testing.}
More recently, the field of networking has shifted towards flexible datacenter networks, where the tomography problem is still interesting \cite{Zeng:2013ff, Roy:2017um}. Modern datacenter networks, however, fundamentally change the constraints of the tomography problem. Datacenters are good expanders \cite{Dinitz:2017wb, Valadarsky:2016wo}, so the worst-case assumptions about the graphs can be relaxed. Modern networks are programmable \cite{Bosshart:2014dr}: instead of the network defining what can and cannot be done, operators program networks to do what they want. Thus, the set of tests $\mathcal{T}$ need not be worst-case.
This leads to graph-constrained group testing, where we can design the tests, and make assumptions about the connectivity of the underlying network.

We are not the first to investigate group testing with graph constraints. 
Du and Hwang discuss two different group-testing problems on graphs in Chapter 12 of \cite{Du:1999by}, although neither are exactly the same as the setup we consider here.
The connection between boolean network tomography and group testing was first observed by~\cite{Harvey:2007ez}.
They give results for specific families of graphs including line graphs, grids, and binary trees. Their most general result is that if a graph has $d$ edge-disjoint spanning trees $T_1,\ldots,T_d$ with $\delta$ being the maximum diameter of the trees, then $O(d^3\log_d(m) + d\min(\delta + \log^2n, \delta\log n))$ tests are sufficient to identify at most $d$ failures.  (As a corollary, this implies that any graph with minimum cut $K$ can identify $d \leq \ceil{ \frac{K-1}{2} } - 1$ failed edges, which is what is stated in Table~\ref{table:num-tests}).

The work closest to our is that of Cheraghchi et al.~\cite{Cheraghchi:2010vm}, who give a randomized construction of connected-subgraph tests via random walks. They show that $\gtupper$ tests suffice for \em very \em well-connected graphs (those with constant mixing time), and $O(d^2\log^3(m))$ tests suffice for certain expanders and Erd\"os-R\'eyni graphs. Their most general result is that for graphs with mixing time $\tau$ and where there exists some $c > 0$ such that the degree $D_v$ of each vertex $v \in V$ lies in between $6c^2d\tau^2 \leq D_v \leq 6c^3d\tau^2$, at most $O(c^4\tau^2\gtuppereq)$ tests are sufficient. 

Finally, \cite{Karbasi:2012gd} considers the adaptive version of graph-constrained group testing.  They present an adaptive algorithm which, on any graph, uses a number of tests that is within a constant factor of the optimal number.

We summarize the general results for related work in non-adaptive graph-constrained group testing in Table~\ref{table:num-tests}.

\paragraph{Giant components in random graphs.}
Finally, we mention some related work on the size of giant components of randomly sparsified graphs, since our main technical theorem (Theorem~\ref{thm:giantcomponent}) is related to this.
This question is well-studied, but we need a slightly different result. 
One difference is that we work with $(\beta, \alpha)$-edge expansion, and in particular our ``giant'' components need not be so giant if $\beta$ is small.  A second difference is that we require that every edge be contained in a large connected component with constant probability; to the best of our knowledge, existing work does not explicitly give such a guarantee.

The study of giant components in $G(n,p)$ (aka, a randomly sparsified complete graph) was initiated by Erd\"{o}s and R\'{e}nyi in \cite{Erdos:1959ud}.
This was extended to sparsifications of the hypercube~\cite{Ajtai:1982jx} and sufficiently good expander graphs~\cite{Frieze:2004kg} and \cite{Chung:2006cw}.   Our approach to Theorem~\ref{thm:giantcomponent} is based on that of Krivelevich and Sudakov~\cite{Krivelevich:2013tn} who give a simpler argument for existing results on giant components.

Most of these results show that as long as $p \leq \frac{ 1 + \eps }{D}$, where $D$ is the degree of the graph $G$, then $G(p)$ contains a giant component.  We show a similar result: if $p \leq \frac{ 1 + \eps }{ \alpha }$, where $G$ is a $(\beta, \alpha)$-edge expander, then there exists a connected component of size at least $\beta n$.  (And moreover, any edge is contained in such a component with decent probability).

\section{Our Results}\label{sec:results}

In this section, we give a brief overview of our results and approach.

\subsection{Main result}

Our group testing scheme is quite simple: the idea is just to choose random edges of the graph, and keep any large-enough connected components.
Recall the notation that for a graph $G =(E,V)$, $G(p) = (V,E')$ is the graph where each edge in $E$ is included in $E'$ independently with probability $p$.  Then the randomized algorithm for constructing the tests is given in Algorithm~\ref{alg:make-tests}.
\begin{algorithm}
  \SetKwInOut{Input}{input}\SetKwInOut{Output}{output}
  \caption{Make-Tests}\label{alg:make-tests}
  \Input{Graph $G = (V,E)$; number of failed edges $d$; parameters $\delta \geq \frac{2}{d}$, $\beta \in (0, \frac{1}{2}]$, and $\tau \in \N$}
  \Output{A collection of tests $\mathcal{T} \subseteq 2^E$}

	$\mathcal{T} \gets \emptyset$\;
	$p \gets \frac{1}{\delta d}$\;
	\For{ $t = 1, \ldots, \tau$ }
	{
		Draw $G' \sim G(p)$ independently from all the other rounds\;
		Find the connected components $A_1, A_2, \ldots, A_r$ of $G(p)$\;
		\For{ each $A_i$ so that $|A_i| \geq \beta n$ }{
			Add the test $A_i$ to $\mathcal{T}$\;
		}
	}
	\Return $\mathcal{T}$

\end{algorithm}

Our main theorem implies that any $(\beta, \alpha)$-edge expander with large enough $\alpha$ admits a group testing scheme with $O(\gtuppereq/\beta)$ tests.
\begin{theorem}\label{thm:mainedge}
There are constants $c,C > 0$ so that the following holds.
Suppose that $G = (V,E)$ is a graph with $|V| = n, |E| = m$.  Let $d \geq 1$ be an integer, and $\delta \geq \frac{2}{d}$.\footnote{Note that $\delta$ may be larger than $1$ if desired.} 
Suppose that $G$ is a $(\beta, \alpha)$-edge-expander with $\beta \in (0,1/2]$ and such that $\alpha \geq 1$ satisfies
\begin{equation} 
	\alpha \geq d \inparen{ \frac{1}{2} + \delta(1 + \delta) }.  \label{eq:alpha-assmp}
\end{equation}

Let $\mathcal{T}$ be the set of tests returned by Algorithm~\ref{alg:make-tests} run with parameters $\delta$, $\beta$, and $\tau = C \gtuppereq$.  Then with probability at least $1 - m^{-cd}$, $\mathcal{T}$ is $d$-disjunct.  Further,
\[ |\mathcal{T}| \leq  \frac{ C \gtuppereq e^{1/\delta} }{\beta}. \]
\end{theorem}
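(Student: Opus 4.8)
The plan is to establish the two assertions of the theorem separately, with essentially all of the work going into $d$-disjunctness. The size bound is immediate and deterministic: in any single round the tests added to $\mathcal{T}$ are the connected components of $G'\sim G(p)$ on at least $\beta n$ vertices, and distinct components are vertex-disjoint, so at most $1/\beta$ of them can qualify. Hence $|\mathcal{T}|\le\tau/\beta$ always, and the stated bound follows once $\tau$ is pinned down. As explained below, $\tau$ will be of order $\gtuppereq\cdot e^{\Theta(1/\delta)}$ — a factor $e^{\Theta(1/\delta)}$ above the unconstrained benchmark — and this is exactly where the $e^{1/\delta}$ in the size bound comes from.

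For disjunctness, fix an item $e\in E$ and a set $B\subseteq E\setminus\{e\}$ with $|B|\le d$; it suffices to produce, with high probability, a test $T\in\mathcal{T}$ with $e\in T$ and $T\cap B=\emptyset$, and then union-bound over the at most $m\binom md\le m^{d+1}$ pairs $(e,B)$. If a single round of Algorithm~\ref{alg:make-tests} produces such a separating test with probability at least $q_0$, then all $\tau$ independent rounds fail with probability at most $(1-q_0)^\tau\le e^{-q_0\tau}$, so taking $\tau=\Theta(q_0^{-1}\,d\log m)$ makes this beat the union bound and yields $d$-disjunctness except with probability $m^{-cd}$. Everything thus reduces to lower-bounding the per-round success probability $q=q(e,B)$; I will argue $q=\Omega\!\left(p(1-p)^d\right)$, and since $p=\frac1{\delta d}$ and $(1-p)^d=e^{-\Theta(1/\delta)}$ this gives $q_0=\Theta\!\left(\frac1{\delta d}e^{-\Theta(1/\delta)}\right)$, hence the claimed order of $\tau$.

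To lower-bound $q$, reveal the edges of $G'\sim G(p)$ in two stages: first all edges of $E\setminus B$, which determines whether $e\in G'$ and, if so, the connected component $\tilde C$ of $e$ in $(G\setminus B)(p)$; then the at most $d$ edges of $B$. The component of $e$ in the full graph $G'$ equals $\tilde C$ exactly when no revealed edge of $B$ is incident to $\tilde C$, and since the $B$-edges are independent of $\tilde C$ this has conditional probability at least $(1-p)^{|B|}\ge(1-p)^d$; when it occurs, the algorithm outputs $\tilde C$ itself (provided $|\tilde C|\ge\beta n$), and that test contains $e$ and avoids $B$. So it remains to show $\Pr[\,e\in G'\text{ and }|\tilde C|\ge\beta n\,]=\Omega(p)$: the factor $p$ is the probability $e$ survives, and conditioned on that I must show that exploring $e$'s component in $(G\setminus B)(p)$ reaches $\beta n$ vertices with constant probability. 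For this I run the Krivelevich--Sudakov-style DFS exploration underlying Theorem~\ref{thm:giantcomponent}, but inside $G\setminus B$: for every set $A$ with $|A|\le\beta n$ seen during the search, $|\partial_{G\setminus B}A|\ge|\partial_G A|-|B|\ge\alpha|A|-d$, so each step exposes at least $\alpha|A|-d$ available edges, present independently with probability $p$. Because the search is seeded by the \emph{edge} $e$ rather than a vertex, the discovered set always has $|A|\ge2$, and the supercriticality condition keeping the exploration alive, $p(\alpha|A|-d)>|A|$ for all $|A|\ge2$, is tightest at $|A|=2$, where it reads $\alpha>d\left(\frac12+\delta\right)$ — exactly what hypothesis~\eqref{eq:alpha-assmp} provides, with an extra $\delta^2 d$ of slack supplying a constant multiplicative drift margin. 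A standard stochastic-domination argument (a supercritical branching process, or biased random walk, that must reach $\beta n$ before hitting $0$) then shows the exploration succeeds with probability $\Omega(1)$, giving $q=\Omega\!\left(p(1-p)^d\right)$.

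The main obstacle is this per-round bound, and specifically driving the coefficient of $d$ in~\eqref{eq:alpha-assmp} down to $\frac12$. If one instead deletes $B$ and invokes Theorem~\ref{thm:giantcomponent} as a black box, one only sees the degraded expansion constant of $G\setminus B$, which is $\alpha-d$ — losing $d$ for every unit of $|A|$ rather than $d$ in total — and the black box then demands $p\ge(1+\eps)/(\alpha-d)$, i.e.\ $\alpha\gtrsim d(1+\delta)$, which is too strong. The factor-of-two saving comes from keeping the sharper bound $|\partial_{G\setminus B}A|\ge\alpha|A|-d$ inside the exploration and exploiting that the exploration starts from an edge, so that $|A|\ge2$ throughout and the drift $p(\alpha|A|-d)$ already exceeds $|A|$ once $\alpha$ is about $d/2$. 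The remaining pieces — the two-stage revealing that costs only $(1-p)^d$, the union bound over $(e,B)$ pairs, and the deterministic size bound — are routine.
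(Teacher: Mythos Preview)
Your proposal is correct and follows essentially the same route as the paper: fix $(e,B)$, lower-bound the per-round separation probability by conditioning on $B$ being absent and analyzing the component of $e$ in $(G\setminus B)(p)$, then union-bound over rounds and over $(e,B)$; the size bound is the same $\tau/\beta$ observation.

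One correction to your final paragraph: the black-box application of Theorem~\ref{thm:giantcomponent} \emph{does} already give the $d/2$ coefficient, and this is exactly what the paper does. The hypothesis of Theorem~\ref{thm:giantcomponent} only asks for $|\partial A|\ge\alpha|A|$ for sets with $2\le|A|\le\beta n$. Hence, when you delete $B$ and apply the theorem to $\bar G=G\setminus B$, you may take the effective expansion constant to be $\alpha-d/2$, since for $|A|\ge2$ one has $|\partial_{\bar G}A|\ge\alpha|A|-d\ge(\alpha-d/2)|A|$; you do not lose $d$ per unit of $|A|$. So the factor-of-two saving you identify is already built into the statement of Theorem~\ref{thm:giantcomponent} via the $|A|\ge2$ restriction (which in turn encodes exactly your observation that the exploration is seeded by an edge), and there is no need to re-run the exploration by hand. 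Your direct-exploration version is of course also fine and gives the same bound.
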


\begin{remark}[Optimality]
If $\beta, \delta$ are constant, then the number of tests required is $\gtupper$, which is nearly optimal even for the unconstrained group testing problem (that is, it nearly matches Theorem~\ref{thm:gt-lower}).  
Moreover, the requirement on the expansion factor $\alpha$ is nearly tight.  That is, in \eqref{eq:alpha-assmp}, we may take $\alpha = \inparen{ \frac{1}{2} + \gamma} d$ for any constant $\gamma > 0$, while still maintaining the near-optimal $\gtupper$ test complexity.
On the other hand, such a statement could not hold for $\alpha$ much smaller than $d/2$.  More precisely, we show below that the degree $D$ of the graph $G$ must be at least $d/2$ to obtain any nontrivial bound on $|\mathcal{T}|$.  Since for any $\gamma > 0$, there exist $D$-regular $(\beta, \alpha)$-edge expanders with $\alpha = (1 - \gamma)D$ for small $\beta \sim \gamma$, this implies that the cut-off $d/2$ above cannot be improved.
\end{remark}

\begin{prop}\label{thm:cut-lower-bound}
Let $G = (V,E)$ be a $D$-regular graph with $|E| = m$.
If $\mathcal{T} \subseteq 2^E$ is a collection of $d$-disjunct connected-subgraph tests, for $d \geq 2D - 2$,
then $|\mathcal{T}| \geq m$.
\end{prop}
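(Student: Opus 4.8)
The plan is to show that the combination of $d$-disjunctness and the connectivity constraint on tests is so rigid that it forces every singleton $\{e\}$ to belong to $\mathcal{T}$; since $|E| = m$, this immediately yields $|\mathcal{T}| \geq m$.

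First I would fix an arbitrary edge $e = \{u,v\} \in E$ and let $B_e$ be the set of all edges \emph{other than} $e$ that are incident to $u$ or to $v$. Since $G$ is $D$-regular, $u$ and $v$ each have at most $D-1$ incident edges besides $e$, so $|B_e| \leq 2D - 2 \leq d$, and of course $e \notin B_e$. Applying the definition of $d$-disjunctness to the pair $(e, B_e)$ produces a test $T \in \mathcal{T}$ with $e \in T$ and $T \cap B_e = \emptyset$.

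The crux of the argument is then to show that such a $T$ must equal $\{e\}$. Here I would use the following elementary observation about connected edge sets: if $(N(T), T)$ is connected and $|T| \geq 2$, then $e$ cannot be an isolated edge of this graph, so at least one of $u, v$ has degree at least $2$ in $(N(T), T)$; hence there is some $e' \in T$ with $e' \neq e$ sharing an endpoint with $e$, i.e.\ $e' \in B_e$. This contradicts $T \cap B_e = \emptyset$, so $|T| = 1$ and therefore $T = \{e\}$. Consequently $\{e\} \in \mathcal{T}$.

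Since $e$ was arbitrary, $\mathcal{T}$ contains the singleton test $\{e\}$ for every $e \in E$, and hence $|\mathcal{T}| \geq |E| = m$, as claimed. I do not anticipate a real obstacle: the only point requiring any care is the small combinatorial fact about connected edge sets (provable via a degree-sum or isolated-component argument), and one should separately note the trivial boundary case $D = 1$, where every connected-subgraph test is automatically a singleton and the conclusion is immediate.
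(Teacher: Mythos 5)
Your proof is correct and is essentially the paper's argument (the paper phrases it contrapositively: if $|\mathcal{T}| < m$ some singleton $\{e\}$ is missing, and then $B = \partial(\{u,v\})$ witnesses the failure of $d$-disjunctness, since the only connected-subgraph test containing $e$ and avoiding $B$ is $\{e\}$ itself). Your direct version, including the explicit justification that a connected edge set of size at least two containing $e$ must use an edge adjacent to $e$, is the same key observation.
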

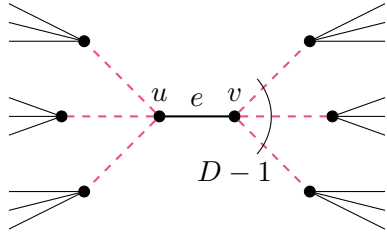
\begin{figure}
\centering
\begin{tikzpicture}
\node[draw,circle, scale=0.4, fill=black](a) at (0,0) {};
\node[draw,circle, scale=0.4, fill=black](b) at (1,0) {};
\node[above=0cm of a] {$u$};
\node[above=0cm of b] {$v$};
\draw[black,thick] (a) to node[above] {$e$} (b);
\node[draw,circle, scale=0.4, fill=black](c) at (-1,1) {};
\node[draw,circle, scale=0.4, fill=black](d) at (-1.3,0) {};
\node[draw,circle, scale=0.4, fill=black](e) at (-1,-1) {};
\draw[thick,nicepink,dashed] (a) to (c);
\draw[thick,nicepink,dashed] (a) to (d);
\draw[thick,nicepink,dashed] (a) to (e);
\node[draw,circle, scale=0.4, fill=black](f) at (2,1) {};
\node[draw,circle, scale=0.4, fill=black](g) at (2.3,0) {};
\node[draw,circle, scale=0.4, fill=black](h) at (2,-1) {};
\draw[thick,nicepink,dashed] (b) to (f);
\draw[thick,nicepink,dashed] (b) to (g);
\draw[thick,nicepink,dashed] (b) to (h);
\draw (c) to (-2, 1.5);
\draw (c) to (-2, 1.25);
\draw (c) to (-2, 1);
\draw (d) to (-2, .25);
\draw (d) to (-2, 0);
\draw (d) to (-2, -.25);
\draw (e) to (-2, -1.5);
\draw (e) to (-2, -1.25);
\draw (e) to (-2, -1);
\draw (f) to (3, 1.5);
\draw (f) to (3, 1.25);
\draw (f) to (3, 1);
\draw (g) to (3, .25);
\draw (g) to (3, 0);
\draw (g) to (3, -.25);
\draw (h) to (3, -1.5);
\draw (h) to (3, -1.25);
\draw (h) to (3, -1);
\draw (1.3, .5) to [out=-50,in=50] (1.3,-.5);
\node at (1, -.75) {$D-1$};
\end{tikzpicture}
\caption{Proof of Proposition~\ref{thm:cut-lower-bound}.  Suppose that the number of edges that may fail is $d \geq 2D - 2$ where $D$ is the degree of the graph.  If the set of tests $\mathcal{T}$ does not contain the singleton $\{e\}$ for an edge $e = \{u,v\}$, then the set $B$ of dashed edges---all of the neighbors of $e$---provides an counter-example to $d$-disjunctness.}\label{fig:pflb}
\end{figure}
\begin{proof}
Suppose that $\mathcal{T} \subseteq 2^E$ with $|\mathcal{T}| < m$.  Then there is some edge $e = \{u,v\}$ so $\{e\} \not\in \mathcal{T}$.  
Let $B = \partial( \{u,v\} )$ be the set of edges adjacent to $e$, so $|B| = 2D - 2 \leq d$.  Then the only connected-subgraph test $T \subset E$ so that $e \in T$ but $T \cap B = \emptyset$ is $\{e\}$, which by assumption is not in $\mathcal{T}$.  Thus, $\mathcal{T}$ is not $d$-disjunct.  (See Figure~\ref{fig:pflb}).
\end{proof}

\subsection{Instantiations for particular graphs}
In this section, we instantiate Theorem~\ref{thm:mainedge} for several families of graphs and compare them to existing results.
We remark that some of these families ($D$-regular expanders, or $G(n,p)$) are natural candidates, while others (like a barbell graph) are concocted to show the difference between our theorem and that of previous work.
A summary is shown in Table~\ref{tab:specific}, and we go into the details below.

\paragraph{Complete Graphs.}\label{par:complete}
The mixing time for a complete graph is constant, so 
\cite{Cheraghchi:2010vm} 
gives an optimal construction requiring $\gtupper$ tests. Theorem~\ref{thm:mainedge} gives the same result.

\paragraph{$D$-Regular Expander Graphs with Constant Spectral Gap.}\label{par:expanders}
$D$-regular expander graphs are $D$-regular graphs which are very ``well-connected.''  One way of measuring this is the \em spectral gap, \em that is the difference between the largest eigenvalue $D$ of the adjacency matrix $A_G$ and the second-largest eigenvalue, $\lambda$.  (We refer the reader to the excellent survey~\cite{Hoory:2006kj} for more background on expander graphs.)
We consider families of $D$-regular graphs $G$ whose second largest eigenvalue $\lambda$ is bounded away from $D$ by a constant: $\lambda \leq D(1 - c)$ for some constant $c \in (0,1)$ independent of $n$. %

For larger $D = \Omega(\log^2 n)$, \cite{Cheraghchi:2010vm} show that $O(d^2\log^3(m))$ tests are sufficient, which is optimal up to logarithmic factors. For smaller $D$, in particular when $D$ is a constant, the best previously known result 
guarantees $O(d^3\log(m/d))$ tests~\cite{Harvey:2007ez}.

As we will see below, Theorem~\ref{thm:mainedge} guarantees $\gtupper$ tests are sufficient for all $D$.
In order to apply Theorem~\ref{thm:mainedge}, we relate the second largest eigenvalue to edge-expansion as follows:

\begin{theorem}[\hspace{1sp}See \cite{Hoory:2006kj} Theorem 4.11] \label{fact:cheegers}
  Let $G=(V,E)$ be a finite, connected, $D$-regular graph and let $\lambda$ be its second eigenvalue. Then
$G$ is $(1/2, \alpha)$-edge expander with
 \[ \frac{D-\lambda}{2} \leq \alpha \leq \sqrt{2D(D-\lambda)}.\]
\end{theorem}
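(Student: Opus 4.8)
The plan is to prove the two inequalities separately: the left-hand bound $\frac{D-\lambda}{2} \le \alpha$ is the easy ``expander mixing'' direction and follows from a single Rayleigh-quotient estimate, while the right-hand bound $\alpha \le \sqrt{2D(D-\lambda)}$ is the discrete Cheeger inequality and needs the standard truncation-plus-sweep argument. Throughout I would work with the combinatorial Laplacian $L = DI - A_G$ of the $D$-regular graph, which satisfies $\langle x, Lx\rangle = \sum_{\{u,v\}\in E}(x_u-x_v)^2$ for every $x \in \mathbb{R}^V$ and whose eigenvalues are exactly $0 = D-D \le D-\lambda \le \cdots$; since $G$ is connected, $D-\lambda$ is the smallest eigenvalue of $L$ on the hyperplane $\mathbf{1}^\perp$. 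Here I read ``$G$ is a $(1/2,\alpha)$-edge expander with $\ldots$'' as asserting that the edge expansion $\alpha = \min_{A\subseteq V,\ 1\le |A|\le n/2} |\partial A|/|A|$ is sandwiched between these two quantities, so the goal is to lower- and upper-bound that minimum.

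For the lower bound, fix any $A$ with $1 \le |A| \le n/2$ and plug the indicator $x = \mathbf{1}_A$ into the quadratic form, which gives $|\partial A| = \langle x, Lx\rangle$. Writing $x = \tfrac{|A|}{n}\mathbf{1} + y$ with $y \perp \mathbf{1}$, we have $Lx = Ly$, so $|\partial A| = \langle y, Ly\rangle \ge (D-\lambda)\|y\|^2$, and a one-line computation gives $\|y\|^2 = |A|(1 - |A|/n) \ge |A|/2$ because $|A| \le n/2$. Dividing by $|A|$ and minimizing over $A$ yields $\alpha \ge \frac{D-\lambda}{2}$.

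For the upper bound, let $f$ be a unit eigenvector of $L$ with eigenvalue $D-\lambda$, so $f \perp \mathbf{1}$; then $f$ has entries of both signs, and after possibly replacing $f$ by $-f$ I may assume $V^+ := \{v : f_v > 0\}$ is nonempty with $|V^+| \le n/2$. Let $g$ be the truncation $g_v = \max(f_v, 0)$. The first step is the key truncation lemma: $\langle g, Lg\rangle \le (D-\lambda)\|g\|^2$. To prove it, expand $\langle g, Lg\rangle = \sum_{v\in V^+} f_v (Lg)_v$, and use the coordinatewise identity $\sum_{u\sim v} f_u = \lambda f_v$ valid for $v\in V^+$; since neighbours $u\notin V^+$ contribute $f_u \le 0$, dropping them only increases the neighbour-sum, so $(Lg)_v = Df_v - \sum_{u\sim v,\ u\in V^+} f_u \le (D-\lambda)f_v$, and multiplying by $f_v > 0$ and summing gives the lemma. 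The second step is the threshold sweep: choose $t$ uniformly at random in $[0,M]$ with $M = \max_v g_v^2$ and set $S_t = \{v : g_v^2 \ge t\} \subseteq V^+$ (nonempty for every such $t$). Then $\mathbb{E}|S_t| = \|g\|^2/M$ and $\mathbb{E}|\partial S_t| = \tfrac1M\sum_{\{u,v\}\in E}|g_u^2 - g_v^2|$, and bounding the latter by Cauchy--Schwarz, $\sum_{\{u,v\}}|g_u-g_v|\,(g_u+g_v) \le \sqrt{\langle g,Lg\rangle}\,\sqrt{\sum_{\{u,v\}}(g_u+g_v)^2} \le \sqrt{(D-\lambda)\|g\|^2}\,\sqrt{2D\|g\|^2}$, gives $\mathbb{E}|\partial S_t| \le \sqrt{2D(D-\lambda)}\cdot\mathbb{E}|S_t|$. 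Hence some fixed $t$ has $|\partial S_t| \le \sqrt{2D(D-\lambda)}\,|S_t|$ with $1\le |S_t|\le n/2$, giving $\alpha \le \sqrt{2D(D-\lambda)}$.

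The main obstacle is the upper bound, and within it the two steps above: getting the truncation inequality $\langle g,Lg\rangle \le (D-\lambda)\|g\|^2$ with the edge-boundary sign bookkeeping correct, and the Cauchy--Schwarz step that converts the quadratic Rayleigh bound into a linear cut bound (this is precisely where the square root in $\sqrt{2D(D-\lambda)}$ is born). By contrast, the lower bound and the final ``averaging forces a good threshold'' step (from $\mathbb{E}[|\partial S_t| - \sqrt{2D(D-\lambda)}\,|S_t|] \le 0$) are routine. Since this is a textbook fact, in the write-up I would simply cite \cite{Hoory:2006kj}; the sketch above records the argument for completeness.
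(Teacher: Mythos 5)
Your argument is correct and is essentially the standard proof of the discrete Cheeger inequality from the cited reference (Hoory--Linial--Wigderson, Theorem 4.11): the paper itself gives no proof, citing the textbook, and your two directions---the Rayleigh-quotient bound on $\mathbf{1}_A$ for $\frac{D-\lambda}{2}\le\alpha$, and the truncated-eigenvector-plus-threshold-sweep argument for $\alpha\le\sqrt{2D(D-\lambda)}$---are exactly the textbook route. Both the truncation inequality $\langle g,Lg\rangle\le(D-\lambda)\|g\|^2$ and the Cauchy--Schwarz step with $\sum_{\{u,v\}\in E}(g_u+g_v)^2\le 2D\|g\|^2$ check out, so nothing is missing.
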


By plugging in Theorem~\ref{fact:cheegers} to Theorem~\ref{thm:mainedge} (with constant and sufficiently small $\delta > 0$) we obtain the following corollary:
\begin{corollary}\label{thm:expanders}
  Let $G$ be a $D$-regular expander with second largest eigenvalue $\lambda \leq D(1-c)$ for some constant $c > 0.$ 
Then for any $d < cD/2$, there is a collection $\mathcal{T}$ of connected subgraph tests so that $\mathcal{T}$ is $d$-disjunct and $|\mathcal{T}| = \gtupper$.
\end{corollary}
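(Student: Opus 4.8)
The plan is to convert the spectral-gap hypothesis into an edge-expansion statement via Cheeger's inequality and then invoke Theorem~\ref{thm:mainedge} essentially verbatim. Since $\lambda \le D(1-c)$ we have $D - \lambda \ge cD$, so the lower bound of Theorem~\ref{fact:cheegers} shows that $G$ is a $(1/2,\alpha)$-edge expander for some $\alpha \ge (D-\lambda)/2 \ge cD/2$. The assumption $d < cD/2$ then supplies exactly the slack we need: $d < \alpha$.

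To apply Theorem~\ref{thm:mainedge} I would fix $\delta$ to be a small absolute constant with $\delta(1+\delta) \le 1/2$ (for concreteness $\delta = 1/3$ works, since $\tfrac13 \cdot \tfrac43 = \tfrac49 \le \tfrac12$). Then $d\left(\tfrac12 + \delta(1+\delta)\right) \le d < \alpha$, so hypothesis~\eqref{eq:alpha-assmp} is satisfied, and Theorem~\ref{thm:mainedge} applies with $\beta = 1/2$ and this $\delta$. It yields, with probability at least $1 - m^{-cd} > 0$, a $d$-disjunct collection $\mathcal{T}$ of connected-subgraph tests — in particular, such a $\mathcal{T}$ exists — of size at most $\frac{C \gtuppereq e^{1/\delta}}{\beta} = 2C e^{1/\delta} \cdot \gtuppereq$. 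Since $\delta$ is an absolute constant, $e^{1/\delta}$ is too, so $|\mathcal{T}| = \gtupper$, which is the claim.

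The one place that needs care — and the only real obstacle — is the requirement $\delta \ge 2/d$ in Theorem~\ref{thm:mainedge}, which is incompatible with a fixed constant $\delta$ once $d$ is below some absolute constant $d_0$ (with the choice above, $d_0 = 6$). For $d \ge d_0$ the argument above goes through unchanged. For the finitely many remaining cases $1 \le d < d_0$ — where $d = O(1)$ and the target is merely $O(\log m)$ tests — one argues directly along the lines of the proof of Theorem~\ref{thm:mainedge}: for a suitable constant $p$, the sparsification $G(p)$ of the expander is still a good expander, hence has a unique giant component that contains any fixed surviving edge with constant probability, so one round of Algorithm~\ref{alg:make-tests} separates any fixed pair $(e, B)$ with $e \notin B$, $|B| \le d$ with constant probability; a union bound over the at most $m^{d+1}$ such pairs shows $O(d^2 \log m) = O(\log m)$ rounds suffice. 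Modulo this bookkeeping for small $d$, the corollary is a direct instantiation of Theorem~\ref{thm:mainedge} with the edge-expansion parameters supplied by Theorem~\ref{fact:cheegers}, so I do not expect any substantive difficulty beyond keeping the constants straight.
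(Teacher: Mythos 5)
Your proposal matches the paper's proof, which is exactly this one-line instantiation: apply Theorem~\ref{fact:cheegers} to get $(1/2,\alpha)$-edge expansion with $\alpha \ge (D-\lambda)/2 \ge cD/2 > d$, then invoke Theorem~\ref{thm:mainedge} with $\beta = 1/2$ and a small constant $\delta$ satisfying $\delta(1+\delta) \le 1/2$. The wrinkle you flag --- that the hypothesis $\delta \ge 2/d$ of Theorem~\ref{thm:mainedge} is incompatible with a fixed constant $\delta$ once $d$ is below an absolute constant --- is real and is silently glossed over by the paper's ``constant and sufficiently small $\delta$,'' so your separate treatment of the finitely many small-$d$ cases is a sensible (and strictly more careful) completion of the same argument rather than a different route.
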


Notice that by Proposition~\ref{thm:cut-lower-bound}, the restriction that $d \leq d_0$ for some $d_0 = O(D)$ is necessary.

\paragraph{Erd\"{o}s-R{\'e}nyi Graphs.}\label{par:erdos-renyi}
Like the $D$-regular expanders above,
an Erd\"{o}s-R{\'e}nyi random graph $G(n,p)$ on $n$ nodes with parameter $0 \leq p \leq 1$ is well-connected, and has good spectral properties with high probability.  However, these graphs are not $D$-regular so we consider them separately. 
For larger $p = \Omega( \log^2 n / n)$, \cite{Cheraghchi:2010vm} show that $O(d^2\log^3(m))$ tests are sufficient to guarantee $d$-disjunctness.
Theorem~\ref{thm:mainedge} can improve this to $\gtupper$, with only the restriction $p \geq p_0$ for some $p_0 = \Omega(n)$.

In order to apply Theorem~\ref{thm:mainedge}, we use the following lemma from~\cite{Cheraghchi:2010vm}, which implies that $G(n,D/n)$ for $D = \Omega(d \log n)$ has $(1/2,\alpha)$ edge expansion for $\alpha \geq (2+\epsilon)d$. Theorem~\ref{thm:mainedge} immediately implies that $\gtupper$ tests are sufficient.

\begin{lemma}[\hspace{1sp}\cite{Cheraghchi:2010vm} Lemma 32]
  For every $\phi < 1/2$ there is an $\alpha > 0$ such that a random graph $G = G(n,p)$ with $p \geq \alpha \ln n/n$ has edge expansion $\alpha \geq \phi D$ with probability $1-o(1)$.
\end{lemma}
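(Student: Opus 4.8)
The plan is to reproduce the standard Chernoff-plus-union-bound argument showing that Erd\H{o}s--R\'enyi graphs are good edge expanders; the statement is quoted from \cite{Cheraghchi:2010vm}, and this is the proof behind it. First, to untangle the notation, I would write $C$ for the constant appearing in the hypothesis (so the assumption reads $p \ge C \ln n / n$) and reserve $\alpha$ for the expansion parameter to be produced; set $D = np$ for the nominal expected degree, and fix the target density $\phi < 1/2$. The goal is then: for $C$ large enough in terms of $\phi$, with probability $1-o(1)$ every set $S \subseteq V$ with $|S| \le n/2$ satisfies $|\partial S| \ge \phi D |S|$, i.e.\ $G(n,p)$ is a $(1/2,\phi D)$-edge expander, which is exactly the claimed $\alpha \ge \phi D$.

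The core estimate is a single-set tail bound. For a fixed $S$ with $|S| = s \le n/2$, the boundary $|\partial S|$ is a sum of $s(n-s)$ independent $\mathrm{Bernoulli}(p)$ variables with mean $\mu_s = s(n-s)p$; since $s \le n/2$ we have $\mu_s \ge sD/2$, and since $\phi < 1/2$ the target value $\phi D s$ lies strictly below $\mu_s$ (the ratio is $\phi D s/\mu_s = \phi n/(n-s) \le 2\phi < 1$). So I would apply a lower-tail Chernoff bound with deviation parameter $1 - \phi n/(n-s) \ge 1 - 2\phi$ to obtain $\Pr[|\partial S| < \phi D s] \le \exp(-c_\phi\, s D)$ for a constant $c_\phi \asymp (1-2\phi)^2$, and then use $D = np \ge C \ln n$ to turn this into a per-set bound of the form $n^{-\Omega(C s)}$.

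Next I would union-bound. There are $\binom{n}{s} \le (en/s)^s \le n^s$ sets of size $s$, so the probability that \emph{any} set of size $s$ violates the expansion inequality is at most $n^s \cdot n^{-\Omega(Cs)}$, which is at most $n^{-s}$ once $C$ is chosen large enough (of order $1/(1-2\phi)^2$). Summing the geometric series over $s = 1, \dots, \lfloor n/2\rfloor$ yields a total failure probability of $O(1/n) = o(1)$, and on the complementary event $G(n,p)$ has the desired edge expansion.

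The one delicate point --- and the reason the lemma only promises existence of a suitable constant rather than a universal one --- is that the Chernoff decay must beat the $\binom{n}{s} \approx n^s$ enumeration \emph{simultaneously} for every $s$ from $1$ to $n/2$. This works precisely because $\mu_s$ grows linearly in $s$, so the exponent in the tail bound also scales with $s$; it is exactly this that forces $C$ to be taken large relative to $1/(1-2\phi)^2$. I would also note in passing that the endpoints $s = 1$ (just concentration of a single vertex's degree) and $s$ near $n/2$ are covered by the very same inequality, since the bound $\mu_s \ge sD/2$ holds throughout the whole range.
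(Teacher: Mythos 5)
The paper does not prove this lemma at all --- it is quoted verbatim as Lemma 32 of \cite{Cheraghchi:2010vm} and used as a black box, so there is no internal proof to compare against. Your reconstruction is the standard (and correct) argument: disentangling the two uses of $\alpha$ in the statement, observing that $|\partial S| \sim \mathrm{Bin}(s(n-s),p)$ has mean $\mu_s \geq sD/2$ for $|S| = s \leq n/2$, applying the multiplicative lower-tail Chernoff bound with deviation $1 - \phi n/(n-s) \geq 1-2\phi$, and beating the $\binom{n}{s} \leq n^s$ union bound by taking the constant in $p \geq C\ln n/n$ of order $1/(1-2\phi)^2$. All the estimates check out, and your closing remark correctly identifies why the exponent scales with $s$ uniformly over the whole range $1 \leq s \leq n/2$.
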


\paragraph{Barbells.}\label{par:barbells}

One of the advantages of our result over previous work is that the notion of $(\beta, \alpha)$-edge expansion captures a more general notion of ``well-connected'' than is captured by minimum cuts or mixing times.  As an extreme example of this, consider a barbell graph $G$, 
which we define as 
two copies of the complete graph $K_{n/2}$ on $n/2$ vertices, connected by one edge $e$ (Figure~\ref{fig:barbell}).
\begin{wrapfigure}{r}{4cm}
\centering
\begin{tikzpicture}[scale=.7]
\draw[fill=blue!10] (0,0) circle (1cm) ;
\draw[fill=blue!10] (3,0) circle (1cm) ;
\node[draw,circle,fill=black, scale=0.4] (a) at (1,0) {};
\node[draw,circle,fill=black,scale=0.4] (b) at (2,0) {};
\draw (a) to node[above] {$e$} (b);
\node at (0,0) {$K_{n/2}$};
\node at (3,0) {$K_{n/2}$};
\end{tikzpicture}
\caption{A barbell graph}\label{fig:barbell}
\end{wrapfigure}
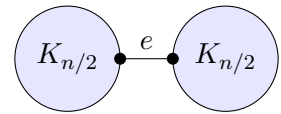

This graph is great for graph-constrained group testing: we test the connecting edge $e$ on its own, then then use $\gtupper$ tests to identify up to $d$ failures in each of the two copies of $K_{n/2}$.  Thus, $\gtupper$ tests are sufficient.
However, this is a worst-case graph for existing work. It has a minimum cut of one edge, so \cite{Harvey:2007ez} only allows $d=1$. The mixing time of this graph is quite large: the probability of reaching the center edge is $O(1/n)$, so certainly $\tau = \Omega(n).$ The degree condition of \cite{Cheraghchi:2010vm} is not satisfied as $D \leq n/2$ which is not $O(1/n^2)$. However, even if we could ignore this condition, \cite{Cheraghchi:2010vm} would use at least $\twiddle{O}(n^2\gtuppereq)$ tests (or, therefore $O(m)$ tests since this is the naive solution). %

On the other hand, our results using $(\beta, \alpha)$-expansion match the intuition that this example should be easy.
Setting $\beta = \frac{1}{4}$ and $\alpha \geq \frac{n}{2} - \frac{n}{4} = \frac{n}{4}$, Theorem~\ref{thm:mainedge} gives an $\gtupper$ bound.

This is example is meant to highlight the difference between our work and existing work.  While the barbell graph is unlikely to be used in practice, it illustrates the intuition that $(\beta, \alpha)$-connectivity does better capture somewhat ``clustery'' graphs---that is, graphs with higher connectivity in some areas than in others---than either minimum cuts or mixing time.   This notion may be useful for real-life networks; for example, networks may have higher connectivity within a rack than between racks.

\subsection{Overview of approach}
Our approach is quite simple: we just choose random edges and take the large connected components.  Intuitively, the reason that this works is because:
\begin{enumerate}
\item If we just chose random edges, we would be back in the traditional group testing setting, where random tests are nearly optimal by Proposition~\ref{prop:gt-upper}.
	\item We will show that---provided $G$ is a $(\beta, \alpha)$-edge expander---then the graph $G(p)$ formed by choosing random edges has mostly large components, of size at least $\beta n$.  Thus, intuitively, throwing away the few disconnected parts should not matter much.
\end{enumerate}

\noindent 
There are several challenges in making the above intuition rigorous:

First, once we throw away the edges that are not in a large connected component, the edges that remain, conditional on remaining, are no longer independent.  Thus, the intuition from point 1 above does not quite hold. However, this ends up being reasonably straightforward to deal with.

The second and more interesting challenge is that
we must show that each edge $e$ is still contained in a test with high probability.  That is, when we pass from $G$ to $G(p) = (V,E')$, the probability that $e \in E'$ is $p$; for the analysis in the random case to still work, we need the probability that $e$ is in a large connected component of $E'$ to also be proportional to $p$.

This second challenge is where the edge expansion of the underlying graph comes in.  To prove that $e$'s connected component in $G(p)$ has a decent probability of being large, we reduce the question to one about random walks. Taking inspiration from \cite{Krivelevich:2013tn}, we introduce a process to generate the connected component of a particular edge $v$, and argue that this process generates a large connected component if and only if an appropriately chosen random walk diverges with decent probability.  Then we prove that this walk indeed diverges.

The details of the approach are given in Section~\ref{sec:proofs}.  However, as a warm-up to show why the intuition presented above is believable, we first prove an easier statement where neither of these challenges arise.

\begin{prop}\label{thm:cutedge}
There is a constant $C > 0$ so that the following holds.
Suppose that $G = (V,E)$ is a graph with $|V| = n$, $|E| = m$.  Let $d \geq 1$ be an integer.
Suppose that the minimum cut $K$ of $G$ satisfies $K \geq 5(d+1)\log(n)$.
Let $\mathcal{T}$ be a set of tests $T \subseteq E$ generated according to the following process:
\begin{itemize}
	\item Initialize $\mathcal{T} = \emptyset$.
	\item For $t = 1, 2, \ldots, C \gtuppereq$:
	\begin{itemize}
		\item Let $T \subset E$ be a random set where each edge is included with probability $1/(d+1)$.
		\item Add $T$ to $\mathcal{T}$.
	\end{itemize}
\end{itemize}
Then $\mathcal{T}$ is $d$-disjunct and each test is connected with high probability.
\end{prop}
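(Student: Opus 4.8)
The statement has two halves — that $\mathcal{T}$ is $d$-disjunct, and that every test is connected — and the first is essentially free. Each $T$ produced by the process is exactly a random test in the sense of Proposition~\ref{prop:gt-upper} (each edge included independently with probability $1/(d+1)$), and $d$-disjunctness is monotone under adding tests; so taking $C$ to be at least the constant hidden in Proposition~\ref{prop:gt-upper} makes $\mathcal{T}$ $d$-disjunct with probability at least $1-1/m$. The real work is to show that with high probability \emph{all} $\tau = C\,\gtuppereq$ tests are connected.

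For a single test I would argue something slightly stronger: writing $p = 1/(d+1)$, the random edge set $T\sim G(p)$ is a connected \emph{spanning} subgraph of $G$ (so in particular $(N(T),T)=(V,T)$ is connected) except with probability $\mathrm{poly}(1/n)$. Note $G$ itself is connected, since $K\ge 5(d+1)\log n\ge 1$. Now $(V,T)$ is disconnected exactly when some cut $(S,V\setminus S)$ of $G$ has no edge in $T$; as the edges crossing a fixed cut are included independently, a union bound over cuts gives
\[ \Pr[(V,T)\text{ disconnected}] \;\le\; \sum_{\emptyset\neq S\subsetneq V}(1-p)^{|\partial S|}. \]
The obstacle is that there are $2^{n-1}-1$ cuts — far too many for a naive union bound, since each term is only about $n^{-5}$ (using $|\partial S|\ge K\ge 5(d+1)\log n$) while $2^n\gg n^5$. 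The fix is Karger's standard cut-counting bound: in a graph of minimum cut $K$, the number of cuts of value at most $\gamma K$ is at most $n^{2\gamma}$. Grouping cuts by value and applying summation by parts, the displayed sum becomes a geometric-type series in which a cut of value $v$ is effectively weighted by $n^{2v/K}(1-p)^v$; since $2\log n/K \le 2/(5(d+1))$ while $-\log(1-p)\ge 1/(d+1)$, the ratio $n^{2/K}(1-p)$ is bounded below $1$ and the series sums to $O(n^{-3})$. (A crude grouping that lumps all cuts of value in $[K,2K)$ at the bound $n^{4}$ would only yield $O(n^{-1})$, which is too weak for the union bound over tests, so this slightly sharper summation is the point where the constant $5$ in the min-cut hypothesis is used.)

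Finally I would union-bound over the $\tau$ tests, and here the key observation is that $\tau$ is only polynomial in $n$: since $K\le\min_v\deg(v)\le n-1$ and $K\ge 5(d+1)\log n$, we get $d=O(n/\log n)$, and $m\le\binom{n}{2}$ gives $\log(m/d)=O(\log n)$, so $\tau = O(d^2\log(m/d)) = O(n^2/\log n)$. Hence $\Pr[\text{some test disconnected}]\le \tau\cdot O(n^{-3}) = O(n^{-1}/\log n)=o(1)$, and combining with the $1/m$ failure probability for disjunctness completes the proof.

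\textbf{Expected main obstacle.} Everything except one step is routine — the reduction to spanning-connectivity of $G(p)$, and the bound $\tau=\mathrm{poly}(n)$ from the min-cut hypothesis. The one delicate point is bounding the probability that the sparsified graph is disconnected by union-bounding over its exponentially many cuts, which forces the use of Karger's cut-counting bound together with a summation by parts (rather than a crude grouping) to extract a small enough per-test failure probability.
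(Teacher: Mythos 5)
Your argument is correct and follows the same skeleton as the paper's proof: $d$-disjunctness comes for free from Proposition~\ref{prop:gt-upper} (and monotonicity under adding tests), and the only real content is that each sparsified graph $G(1/(d+1))$ is spanning-connected, which is where the min-cut hypothesis enters. The one genuine difference is in how that connectivity step is handled. The paper cites Karger's sparsification theorem as a black box --- if $p > 5\log n/K$ then $G(p)$ is connected with probability at least $1-1/n$ --- and then union-bounds over the tests to get a failure probability of $|\mathcal{T}|/n = O(\gtuppereq/n)$. You instead re-derive the connectivity statement from Karger's cut-counting bound, doing the summation by parts yourself to extract a per-test failure probability of $O(n^{-3})$. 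Your extra care actually buys something: the hypothesis $K \geq 5(d+1)\log n$ together with $K \leq n-1$ only forces $d = O(n/\log n)$, so the paper's bound $O(\gtuppereq/n)$ can be as large as $\Theta(n/\log n)$ and is not obviously $o(1)$ across the full parameter range, whereas your $O(n^{-3})$ per test makes the union bound over $\tau = O(n^2/\log n)$ tests close for every $d$ permitted by the hypothesis. (Equivalently, one can patch the paper's version by invoking the stronger form of Karger's theorem in which the failure probability decays polynomially in the ratio $pK/\log n$; your computation is essentially a self-contained proof of that stronger form.) The rest of your writeup --- the reduction of ``each test is connected'' to spanning-connectivity of $G(p)$, and the observation that $\tau$ is polynomial in $n$ --- is routine and correct.
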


\begin{proof}
  Because of Proposition~\ref{prop:gt-upper}, it suffices to show that a random set $T$ is connected with high probability.
Fortunately, this is true:
\begin{theorem}[\hspace{1sp}\cite{Karger:1994wa}]
\label{fact:cut-connected-threshold}
Let $K$ be the minimum cut of $G$. If $p > \min\(\frac{5\log n}{K}, 1\)$, $G(p)$ is connected with probability at least $1-\frac{1}{n}$.
\end{theorem}

By Theorem~\ref{fact:cut-connected-threshold} and a union bound, the probability any test is disconnected is at most $|\mathcal{T}|/n = O\(\frac{\gtuppereq}{n}\)$.
\end{proof}

\section{Proofs}\label{sec:proofs}
In this section, we prove Theorem~\ref{thm:mainedge}.
Our proof is based on the following theorem, which implies that any edge $e$ is reasonably likely to be contained in a large connected component of $G(p)$.
\begin{theorem}\label{thm:giantcomponent}
Let $\beta \in (0,1/2)$ and $\alpha \geq 1$, and
let $G = (V,E)$ be a graph with $|V| = n, |E| = m$ %
so that,
for any set $A \subset V$ of size $2 \leq |A| \leq \beta n$, we have
\[ |\partial A| \geq \alpha |A|. \]
For an edge $e \in E$, let $C_e$ denote the connected component of $G(p)$ containing $e$, or $\emptyset$ if there is no such connected component (that is, if $e \not\in G(p)$ was deleted), and let $|C_e|$ denote the number of vertices in $C_e$.

Choose any $\eps \in (0,1/3)$, and suppose that
\[ p \geq \frac{ 1 + \eps }{\alpha}. \] %
Then
for all edges $e \in E$,
\begin{equation}
	\PR{ |C_e| \geq \beta n } \geq \frac{ p \eps }{8}.\label{eq:included-wcp}
\end{equation}
\end{theorem}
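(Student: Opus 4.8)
The plan is to start by conditioning on the edge $e = \{u,v\}$ surviving the sparsification, which happens with probability exactly $p$; since $\{|C_e| \ge \beta n\}$ forces $e \in G(p)$, it then suffices to show that, conditionally on $e \in G(p)$, we have $|C_e| \ge \beta n$ with probability at least $\eps/8$. To analyze $C_e$ I would run a sequential exploration of the component in the spirit of Krivelevich--Sudakov \cite{Krivelevich:2013tn}: maintain a set $S$ of \emph{saturated} vertices and a queue $Q$ of vertices known to lie in $C_e$ but not yet processed, starting from $S = \emptyset$, $Q = \{u\}$ (the edge $e$ being present, $v$ enters the queue as soon as $u$ is processed). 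At each step pop a vertex $w$ from $Q$ and, for every $G$-neighbour $x$ of $w$ with $x \notin S$, reveal whether $\{w,x\} \in G(p)$; newly discovered endpoints are appended to $Q$, and then $w$ is moved into $S$. Two facts make this process tractable: (i) each edge of $G$ with an endpoint in the saturated set is revealed exactly once, so all the reveals are i.i.d.\ $\mathrm{Bernoulli}(p)$, and the set of discovered vertices is always $D = S \cup Q$, with $|C_e| = |D|$ once $Q$ empties; (ii) after $t$ steps (so $|S| = t$) the number of edges revealed so far is precisely $|E_G(S)| + |\partial_G S|$, which by the expansion hypothesis is at least $\alpha t$ while $2 \le t \le \beta n$, and moreover every edge of $\partial_G S$ whose far endpoint is still undiscovered has already been revealed and came up absent. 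The goal then becomes: show that with probability at least $\eps/8$ the queue does not empty before $|D|$ reaches $\beta n$.

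This is where edge expansion and the hypothesis $p \ge (1+\eps)/\alpha$ enter. Roughly speaking, by step $t$ the process has made at least $\alpha t$ independent reveals while consuming only $t$ vertices, and since a reveal that probes fresh territory discovers a new vertex with probability $p$ and $\alpha p \ge 1 + \eps$, the frontier tends to grow. Making this precise, one shows via a Chernoff bound on the i.i.d.\ reveals that as long as $|D| < \beta n$ the relevant quantity (morally the queue size, or the number of not-yet-revealed boundary edges of $D$) stochastically dominates a random walk $W$ with $W_0 \ge 1$, increments bounded below by $-1$, and conditional drift at least $\eps$. It then remains to show that such a positive-drift walk started from a positive value stays strictly positive until it reaches height $\Theta(\beta n)$ with probability at least $\eps/8$ --- exactly a non-extinction estimate for the slightly supercritical branching process with offspring mean $1+\eps$, for which the survival probability is $\Theta(\eps)$. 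Since $W$ staying positive means the queue never empties, the exploration reaches $\beta n$ discovered vertices; multiplying by the factor $p$ from the initial conditioning gives $\PR{|C_e| \ge \beta n} \ge p\eps/8$.

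The step I expect to be the main obstacle is the coupling with a genuine positive-drift walk. The difficulty is that a single popped vertex may have very few $G$-edges into the undiscovered region, so the one-step drift of the naive queue-size process need not be positive, and some reveals are ``wasted'' on edges whose far endpoint is already in $D$; one has to choose the quantity tracked by the walk (and the order in which vertices are popped) carefully, and amortize the expansion bound $|\partial_G S| \ge \alpha |S|$ over several steps --- controlling the accumulation of reveals that come up absent or land inside $D$ --- in order to extract a clean per-step drift of at least $\eps$. The condition $p \ge (1+\eps)/\alpha$ is precisely what makes $\alpha p - 1 \ge \eps$; the restrictions $\beta \le 1/2$ and $|A| \le \beta n$ in the expansion hypothesis are what keep the linear lower bound on $|\partial_G S|$ available throughout the range $|D| < \beta n$ that we care about.
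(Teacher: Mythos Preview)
Your overall architecture matches the paper's: condition on $e$ surviving (probability $p$), explore the component, and show the exploration survives to size $\beta n$ with probability at least $\eps/8$ via a gambler's-ruin/survival argument. The substantive difference is the exploration process itself, and this is precisely where you flag the main obstacle.

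You run the standard Krivelevich--Sudakov \emph{vertex}-based exploration (pop a vertex, reveal all its incident edges at once), which---as you correctly note---suffers from variable per-step degree and ``wasted'' reveals into $D\setminus S$, forcing an amortization you leave unresolved. The paper sidesteps this entirely with an \emph{edge}-based exploration: maintain a tree $S_t$ of accepted edges and a set $B_t$ of rejected edges, and at each step reveal exactly \emph{one} edge from $U_t := \partial(N(S_t)) \setminus B_t$. Now $|S_t| + |B_t| = t$ and $|S_t| = 1 + \sum_{i\le t} X_i$ with $X_i$ i.i.d.\ $\text{Ber}(p)$, so expansion of $N(S_t)$ gives directly
\[
|U_t| \;\ge\; \alpha|S_t| - |B_t| \;=\; (1+\alpha)|S_t|-t \;=\; \alpha + \sum_{i=1}^t\bigl((1+\alpha)X_i - 1\bigr).
\]
Hence the process halts before $|N(S_t)|\ge\beta n$ only if the walk with increments $Y_i\in\{+\alpha,-1\}$ (probabilities $p,1-p$) ever reaches $-\alpha$; the paper bounds this via the asymmetric gambler's-ruin formula (with a discretization to handle non-integer $\alpha$) and extracts the explicit constant $\eps/8$. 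So the edge-by-edge reveal is what buys the clean one-Bernoulli-per-step walk and eliminates the amortization you anticipated. One smaller correction: a Chernoff bound is not the right tool here, since you need a uniform-in-$t$ survival statement rather than pointwise concentration; gambler's ruin (equivalently, the supercritical branching-process survival estimate you also mention) is what is actually used.
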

Before we prove Theorem~\ref{thm:giantcomponent}, we show how it can be used to prove Theorem~\ref{thm:mainedge}.
\begin{proof}[Proof of Theorem~\ref{thm:mainedge}, assuming Theorem~\ref{thm:giantcomponent}]

Let $G = (V,E)$, and let $G(p) = (V, E')$ be the random sparsification.
Fix $B \subseteq E$ with $|B| = d$ and $e \in E$.  We will show that with high probability, at least one of the tests in $\mathcal{T}$ will separate $e$ from $B$.  Then we will union bound over all choices for $e$ and $B$ to conclude that $\mathcal{T}$ is $d$-disjunct.

Consider one draw of $G(p)$ in Make-Tests, and
let $T_1, T_2, \ldots$ be the connected components of $G(p)$ which have size at least $\beta n $.
(That is, $T_1, T_2,\ldots$ are the tests that we add to $\mathcal{T}$).
Then we have
\begin{align*}
\PR{ \exists i \text{ s.t. } e \in T_i \text{ and } B \cap T_i = \emptyset }
&= \PR{ |C_e| \geq \beta n  \text{ and } B \cap C_e = \emptyset } \\
&\leq \PR{ |C_e| \geq \beta n  \text{ and } B \cap E' = \emptyset } \\
&= \PR{ |C_e| \geq \beta n  \,\mid \, B \cap E' = \emptyset } \cdot \PR{ B \cap E' = \emptyset }.
\end{align*}
We have
\[ \PR{ B \cap E' = \emptyset } = (1 - p)^d, \]
since this is just the probability that all the edges in $B$ survive in $G(p)$.

For our fixed $e,B$, let $\bar{G} = (V, E \setminus B)$ be the graph with all the edges in $B$ removed.  
Consider the distribution of $G(p)$ conditioned on the event that $B \cap E' = \emptyset$.  This is the same as the distribution of $\bar{G}(p)$.
To see this, notice that for $A \subseteq E \setminus B$, the random sets of $A \cap E'$ and $B\cap E'$ are independent.
Let $\overline{C_e}$ be the connected component containing $e$ in $\bar{G}(p)$.
Then
\[ \PR{ |C_e| \geq \beta n  \,\mid \, B \cap E' = \emptyset } = \PR{ |\overline{C_e}| \geq \beta n }. \]
Notice that since $G$ is a $(\beta, \alpha)$-edge expander with 
$ \alpha > \frac{d}{2}(1 + \eps)$, then 
for any set $A \subset V$ of size $2 \leq |A| \leq \beta n$, we have
\[ |\partial A| \geq \alpha |A| - d \geq \inparen{\alpha - \frac{d}{2}}|A|. \]
Thus, we may apply Theorem~\ref{thm:giantcomponent} to $\bar{G}$.  
Choose
\[ p = \frac{ 2 }{ d \delta }, \]
and apply Theorem~\ref{thm:giantcomponent} with $\eps = 1$.
Our assumption \eqref{eq:alpha-assmp} that $\alpha \geq d\inparen{\frac{1}{2} + \delta}$ and the choice of $p$ 
implies that
\[ p = \frac{2 }{ d\delta } \geq \frac{ 1 + \eps }{ \alpha - d/2 },  \]
we conclude that
\[ \PR{ |\overline{C_e}| \geq \beta n  }  \geq \frac{p \eps}{8}. \]
Putting things together,
for a single draw of $G(p)$, we have
\[ \PR{ \exists i \text{ s.t. } e \in T_i \text{ and } B \cap T_i = \emptyset }  \geq \frac{\eps p}{2} \cdot (1 - p)^d. \]
Then, by independence, the probability that this happens at least once over $\tau$ draws is at least
\[ \PR{ \exists T \in \mathcal{T} \text{ s.t. } e \in T \text{ and } B \cap T = \emptyset }
\geq 1 - \inparen{ 1 - \frac{ \eps p (1 - p)^d }{8 } }^\tau. \]
By a union bound, the probability that $\mathcal{T}$ is $d$-disjunct is at least
\begin{align*}
\PR{ \mathcal{T} \text{ is $d$-disjunct } } &\geq 1 - m \cdot {m \choose d} \inparen{ 1 - \frac{\eps p(1 - p)^d }{8} }^\tau \\
&\geq 1 - m {m \choose d} \exp \inparen{ \frac{ - \tau \eps p (1 - p)^d }{ 8 } } \\
&\geq 1 - m {m \choose d} \exp \inparen{ \frac{ - \tau \eps p e^{-1/\delta}}{ 16 } } \\
&\geq 1 - \exp \inparen{ (d+1) \ln(m) - \frac{ \tau \eps  p e^{-1/\delta} }{ 16 } }\\
&= 1 - \exp \inparen{ (d+1) \ln(m) - \frac{ \tau \eps   e^{-1/\delta} }{ 16 d \delta } }.
\end{align*}
Above, we have used the choice of $\delta$ to imply that  $p \leq 1/2$, which implies that 
\[(1 - p)^d \geq \frac{1}{2} \exp(-pd) = \frac{1}{2} \exp(-1/\delta). \]
Thus, there are some constants $c,C > 0$ so that if
\[ \tau \geq \frac{C d^2 \delta \exp(1/\delta) \ln(m)}{\eps}, \]
then
we have
\[ \PR{ \mathcal{T} \text{ is $d$-disjunct}} \geq 1 - \exp( - c d \ln(m) ) \]
as desired.

Finally, we observe that the number of tests is at most $\tau / \beta$, since there are at most $1/\beta$ connected components of size $\beta n$ in a graph with $n$ vertices, so there are at most $1 / \beta$ tests $A_i$ added to $\mathcal{T}$ for each $t \in \{1, \ldots, \tau\}$.  
Choosing $\eps = \delta$ completes the proof.
\end{proof}

Next, we prove Theorem~\ref{thm:giantcomponent}.

\begin{proof}[Proof of Theorem \ref{thm:giantcomponent}]
As in the theorem statement, suppose $p \geq (1 + \eps)/\alpha$ for some $\eps \in (0,1/3)$, and
let $G = (V,E)$ be a $(\beta, \alpha)$-edge expander.
Write $G(p) = (V, E')$, so that $E' \subseteq E$.  Choose $p \geq (1 + \eps)/\alpha$.

Consider the probability $(u,v)$ is in a large component,
\[ \PR{C_{(u,v)} \geq \beta n } = \PR{ (u,v) \in E'  } \cdot \PR{ |C_{(u,v)}| \geq \beta n  \,\mid\, (u,v) \in E' }. \]
Condition on the event that $(u,v) \in E'$, and
imagine building $C_{(u,v)}$ by starting with $\{(u,v)\}$ and building the set outwards, one at a time.
More precisely, consider the following randomized process:

\begin{itemize}
\item $S_0 \gets \{(u,v)\}$, $B_0 \gets \emptyset$
\item For $t=0,1,2,\ldots$:
\begin{enumerate}
	\item Let $N(S_t)$ be the set of vertices in $V$ adjacent to $S_t$.
	\item Let $U_t = \partial(N(S_t)) \setminus B_t$ be the set of unvisited edges that lie on the boundary of $N(S_t)$.
	\item If $|U_t| = 0$, \textbf{break}.
	\item $S_{t+1} \gets S_t$, $B_{t+1} \gets B_t$.
	\item Choose an edge $e \in U_t$ arbitrarily.
	\item With probability $p$, declare that $e$ has survived and add it to $S_{t+1}$.
	\item Otherwise (with probability $1-p$) add $e$ to $B_{t+1}$.
\end{enumerate}
\end{itemize}
This process is illustrated in Figure~\ref{fig:search-process}.

  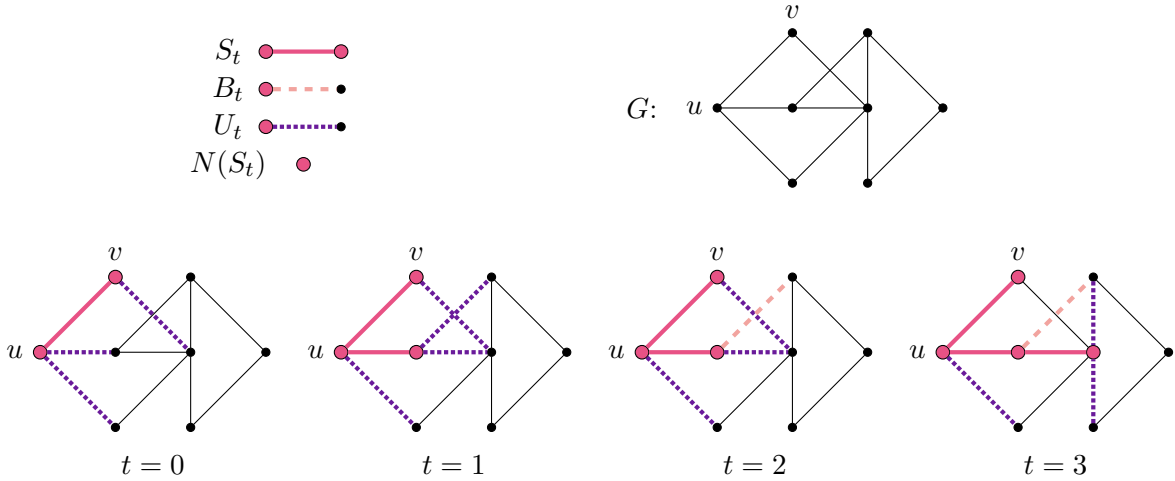
\begin{figure}[H]
    \centering
\begin{tikzpicture}
\tikzstyle{sedge}=[nicepink, ultra thick]
\tikzstyle{bedge}=[nicepeach, ultra thick,dashed]
\tikzstyle{uedge}=[nicepurple, ultra thick,densely dotted]
\tikzstyle{nedge}=[black]
\tikzstyle{nv} = [circle, black, draw, fill=black, scale=.3]
\tikzstyle{sv} = [circle, black, draw, fill=nicepink, scale=.5]

\begin{scope}
\node[sv] (t1) at (0,0) {};
\node[sv] (t2) at (1,0) {};
\draw[sedge] (t1) -- (t2);
\node at (-.5, 0) {$S_t$};
\end{scope}
\begin{scope}[yshift=-.5cm]
\node[sv] (t1) at (0,0) {};
\node[nv] (t2) at (1,0) {};
\draw[bedge] (t1) -- (t2);
\node at (-.5, 0) {$B_t$};
\end{scope}
\begin{scope}[yshift=-1cm]
\node[sv] (t1) at (0,0) {};
\node[nv] (t2) at (1,0) {};
\draw[uedge] (t1) -- (t2);
\node at (-.5, 0) {$U_t$};
\end{scope}
\begin{scope}[yshift=-1.5cm]
\node[sv] (t1) at (.5,0) {};
\node at (-.5, 0) {$N(S_t)$};
\end{scope}
\begin{scope}[xshift=6cm,yshift=-.75cm]
\node[nv](a) at (0,0) {};
\node[left=0cm of a] {$u$};
\node[nv](b) at (1,0) {};
\node[nv](c) at (1,1) {};
\node[above=0cm of c] {$v$};
\node[nv](d) at (1,-1) {};
\node[nv](e) at (2,1) {};
\node[nv](f) at (2,0) {};
\node[nv](g) at (2,-1) {};
\node[nv](h) at (3,0) {};
\draw[nedge] (a) -- (c);
\draw[nedge] (a) -- (b);
\draw[nedge] (a) -- (d);
\draw[nedge] (c) -- (f);
\draw[nedge] (d) -- (f);
\draw[nedge] (b) -- (f);
\draw[nedge] (b) -- (e);
\draw[nedge] (e) -- (f);
\draw[nedge] (g) -- (f);
\draw[nedge] (h) -- (e);
\draw[nedge] (h) -- (g);
\node at (-1,0) {$G$:};
\end{scope}

\begin{scope}[yshift=-4cm,xshift=-3cm]
\node at (1.5, -1.5) {$t=0$};
\node[sv](a) at (0,0) {};
\node[left=0cm of a] {$u$};
\node[nv](b) at (1,0) {};
\node[sv](c) at (1,1) {};
\node[above=0cm of c] {$v$};
\node[nv](d) at (1,-1) {};
\node[nv](e) at (2,1) {};
\node[nv](f) at (2,0) {};
\node[nv](g) at (2,-1) {};
\node[nv](h) at (3,0) {};
\draw[sedge] (a) -- (c);
\draw[uedge] (a) -- (b);
\draw[uedge] (a) -- (d);
\draw[uedge] (c) -- (f);
\draw[nedge] (d) -- (f);
\draw[nedge] (b) -- (e);
\draw[nedge] (b) -- (f);
\draw[nedge] (e) -- (f);
\draw[nedge] (g) -- (f);
\draw[nedge] (h) -- (e);
\draw[nedge] (h) -- (g);
\end{scope}

\begin{scope}[yshift=-4cm,xshift=1cm]
\node at (1.5, -1.5) {$t=1$};
\node[sv](a) at (0,0) {};
\node[left=0cm of a] {$u$};
\node[sv](b) at (1,0) {};
\node[sv](c) at (1,1) {};
\node[above=0cm of c] {$v$};
\node[nv](d) at (1,-1) {};
\node[nv](e) at (2,1) {};
\node[nv](f) at (2,0) {};
\node[nv](g) at (2,-1) {};
\node[nv](h) at (3,0) {};
\draw[sedge] (a) -- (c);
\draw[sedge] (a) -- (b);
\draw[uedge] (a) -- (d);
\draw[uedge] (c) -- (f);
\draw[nedge] (d) -- (f);
\draw[uedge] (b) -- (e);
\draw[uedge] (b) -- (f);
\draw[nedge] (e) -- (f);
\draw[nedge] (g) -- (f);
\draw[nedge] (h) -- (e);
\draw[nedge] (h) -- (g);
\end{scope}

\begin{scope}[yshift=-4cm,xshift=5cm]
\node at (1.5, -1.5) {$t=2$};
\node[sv](a) at (0,0) {};
\node[left=0cm of a] {$u$};
\node[sv](b) at (1,0) {};
\node[sv](c) at (1,1) {};
\node[above=0cm of c] {$v$};
\node[nv](d) at (1,-1) {};
\node[nv](e) at (2,1) {};
\node[nv](f) at (2,0) {};
\node[nv](g) at (2,-1) {};
\node[nv](h) at (3,0) {};
\draw[sedge] (a) -- (c);
\draw[sedge] (a) -- (b);
\draw[uedge] (a) -- (d);
\draw[uedge] (c) -- (f);
\draw[nedge] (d) -- (f);
\draw[bedge] (b) -- (e);
\draw[uedge] (b) -- (f);
\draw[nedge] (e) -- (f);
\draw[nedge] (g) -- (f);
\draw[nedge] (h) -- (e);
\draw[nedge] (h) -- (g);
\end{scope}
\begin{scope}[yshift=-4cm,xshift=9cm]
\node at (1.5, -1.5) {$t=3$};
\node[sv](a) at (0,0) {};
\node[left=0cm of a] {$u$};
\node[sv](b) at (1,0) {};
\node[sv](c) at (1,1) {};
\node[above=0cm of c] {$v$};
\node[nv](d) at (1,-1) {};
\node[nv](e) at (2,1) {};
\node[sv](f) at (2,0) {};
\node[nv](g) at (2,-1) {};
\node[nv](h) at (3,0) {};
\draw[sedge] (a) -- (c);
\draw[sedge] (a) -- (b);
\draw[uedge] (a) -- (d);
\draw[nedge] (c) -- (f);
\draw[nedge] (d) -- (f);
\draw[bedge] (b) -- (e);
\draw[sedge] (b) -- (f);
\draw[uedge] (e) -- (f);
\draw[uedge] (g) -- (f);
\draw[nedge] (h) -- (e);
\draw[nedge] (h) -- (g);
\end{scope}

\end{tikzpicture}
    \caption{First few steps of the process to build $C_{(u,v)}$ in the proof of Theorem \ref{thm:giantcomponent}.  The edge from $U_t$ which we chose ended up being in $E'$ in steps $t=1$ and $t=3$, but not $t=2$. }
    \label{fig:search-process}
  \end{figure}

It is not hard to see that this process terminates at the first time $t_{\max}$ so that $|U_{t_{\max}}| = 0$, and when it does, $N(S_{t_{\max}})$ is distributed identically to the set of vertices in $C_{(u,v)}$.
Thus, to bound $|C_{(u,v)}|$ with high probability, we can bound the set $|N(S_t)|$ with high probability.
Notice that $S_t$ is a tree; thus,
\[ |S_t| = |N(S_t)| - 1, \]
and so it suffices to show that
\[ |S_{t_{\max}}| > \beta n +1  \]
with high probability.
To that end, we will show that, as long as $|S_t| \leq \beta n +1,$ the probability that $|U_t| = 0$ is very small.

At each step $t$, we either add an edge to $B_t$ or to $S_t$, so $|S_t| + |B_t| = t$.  Let $X_t$ be the random variable which is $1$ if we added an edge to $S_t$ in step $t$; thus $X_t \sim \text{Ber}(p)$ and $|S_t| = 1 + \sum_{i=1}^t X_i$.

Suppose that $S_t$ is nonempty and $|S_t| \leq \beta n + 1$, so $2 \leq |N(S_t)| \leq \beta n$.  By our expansion assumption, we have
\[ |\partial N(S_t)| \geq \alpha |S_t|, \]
and so
\begin{align*}
 |U_t| &=
| \partial(N(S_t)) \setminus B_t | \\
& \geq \alpha |S_t| - |B_t| \\
& = \alpha |S_t| - (t- |S_t|) \\
&= (1 + \alpha) |S_t| - t \\
&= (1 + \alpha)\inparen{ 1 + \sum_{i=1}^t X_i } - t \\
&\geq \sum_{i=1}^t ((1 + \alpha) X_i - 1) + \alpha.
\end{align*}

Thus,
\begin{align*}
\PR{ \exists t > 0 \text{ \, s.t. \,} |S_t| \leq \beta n + 1 \text{ and } |U_t| = 0 } &=
\PR{ \exists t> 0 \text{ \, s.t. \,} |S_t| \leq \beta n + 1 \text{ and }  |U_t| \leq 0 } \\
&\leq \PR{ \exists t> 0 \text{ \, s.t. \, }  \sum_{i=1}^t \inparen{ (1 +\alpha) X_i - 1  } \leq -\alpha}. %
\end{align*}

Let $Y_i = (\alpha +1 )X_i - 1$, so that $Y_i = \alpha$ with probability $p$, and $Y_i = -1$ with probability $1 - p$.  Let $Z_t = \sum_{i=1}^t Y_i$ be a random walk.  The above shows that
\begin{equation}\label{eq:randomwalk2}
 \PR{ \exists t > 0 , |S_t| \leq \beta n + 1 \text{ and } |U_t| = 0 } \leq \PR{ \exists t>0, Z_t = - \alpha}.
\end{equation}
Let $\tau_k(Y_i)$ denote the smallest $t$ so that $\inabs{\sum_{i=1}^t Y_i} \geq |k|$ and $\sgn(\sum_{i=1}^t Y_i) = \sgn(k)$, so that \eqref{eq:randomwalk2} reads: for all $M > 0$,
\begin{equation}\label{eq:randomwalk}
 \PR{ \exists t > 0, |S_t| \leq \beta n + 1 \text{ and } |U_t| = 0 } \leq 1 - \PR{ \tau_M(Y_i) \leq \tau_{-\alpha}(Y_i) }.
\end{equation}

Thus our goal is to show that the probability on the right hand side is bounded away from zero for any $M$; then there will be some constant probability that the edge-exploration process will make progress as long as $|S_t| \leq \beta n + 1$.

\begin{claim}
Using the notation above, for all $M > 0$,
\[ \PR{ \tau_M(Y_i) \leq \tau_{-\alpha}(Y_i) } \geq \frac{ \eps }{8}. \]
\end{claim}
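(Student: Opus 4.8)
\textbf{Proof proposal.}
The plan is to read the claim as a gambler's‑ruin estimate for the asymmetric random walk $Z_t=\sum_{i\le t}Y_i$, whose increments are $+\alpha$ with probability $p$ and $-1$ with probability $1-p$; the hypothesis $p\ge (1+\eps)/\alpha$ guarantees a strictly positive drift, since $\mathbb{E}[Y_i]=p(\alpha+1)-1>\eps$. The natural tool is the exponential martingale $W_t=\lambda^{Z_t}$, where $\lambda\in(0,1)$ is chosen so that $\mathbb{E}[\lambda^{Y_i}]=p\lambda^{\alpha}+(1-p)\lambda^{-1}=1$, equivalently so that $h(\lambda):=p\lambda^{\alpha+1}-\lambda+(1-p)=0$. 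I would first check that such a $\lambda$ exists in $(0,1)$ and is unique: $h$ is convex on $(0,\infty)$ (as $\alpha\ge 1$), $h(0)=1-p>0$, $h(1)=0$, and $h'(1)=p(\alpha+1)-1=\mathbb{E}[Y_i]>0$, so $h$ dips below zero just to the left of $1$ and therefore has a second root in $(0,1)$, with convexity ruling out any further roots.

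Next I would apply the optional stopping theorem to $W_t$ at the stopping time $\sigma=\tau_M(Y_i)\wedge\tau_{-\alpha}(Y_i)$. Because each step is at least $-1$ and the drift is positive, $Z_t\to+\infty$ almost surely, so $\sigma\le\tau_M<\infty$ a.s.; moreover $W_{t\wedge\sigma}$ is bounded, since up to time $\sigma$ the walk stays in $(-\alpha-1,\,M+\alpha)$. Hence $1=W_0=\mathbb{E}[\lambda^{Z_\sigma}]$. On the event $\{\tau_M\le\tau_{-\alpha}\}$ the summand $\lambda^{Z_\sigma}$ is nonnegative, and on its complement $Z_\sigma\le-\alpha$, so $\lambda^{Z_\sigma}\ge\lambda^{-\alpha}$ because $\lambda<1$. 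Therefore $1\ge\lambda^{-\alpha}\,\PR{\tau_M>\tau_{-\alpha}}$, which rearranges to
\[ \PR{\tau_M(Y_i)\le\tau_{-\alpha}(Y_i)}\ \ge\ 1-\lambda^{\alpha}\qquad\text{for every }M>0 . \]
It then remains only to show $1-\lambda^{\alpha}\ge\eps/8$.

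For that last step I would extract from $h(\lambda)=0$ the two identities $1-\lambda^{\alpha}=\frac{(1-p)(1-\lambda)}{p\lambda}$ and $1-\lambda^{\alpha+1}=\frac{1-\lambda}{p}$. Combining the second with $p\ge(1+\eps)/\alpha$ gives $\alpha(1-\lambda)\ge(1+\eps)(1-\lambda^{\alpha+1})$, and then $\lambda^{\alpha+1}\le e^{-(\alpha+1)(1-\lambda)}$ gives, with $x:=(\alpha+1)(1-\lambda)>0$, the scalar inequality $\frac{\alpha}{\alpha+1}x\ge(1+\eps)(1-e^{-x})$, hence $x\ge(1+\eps)(1-e^{-x})$. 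The function $\phi(x)=x-(1+\eps)(1-e^{-x})$ satisfies $\phi(0)=0$, $\phi'(0)=-\eps<0$, $\phi\to\infty$, and $\phi(\eps)=(1+\eps)e^{-\eps}-1<0$ (since $(1+\eps)e^{-\eps}$ is strictly decreasing in $\eps$ and equals $1$ at $\eps=0$), so $\phi(x)\ge0$ together with $x>0$ forces $x>\eps$. Finally $\lambda^{\alpha}\le e^{-\alpha(1-\lambda)}\le e^{-x/2}<e^{-\eps/2}$ (using $\alpha\ge1$, so $\tfrac{\alpha}{\alpha+1}\ge\tfrac12$), whence $1-\lambda^{\alpha}>1-e^{-\eps/2}\ge\tfrac{\eps}{2}-\tfrac{\eps^{2}}{8}\ge\tfrac{\eps}{8}$ for $\eps<1/3$.

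I expect the main obstacle to be this last inequality $1-\lambda^{\alpha}\ge\eps/8$: $\lambda$ is only defined implicitly, so one must argue uniformly over the whole admissible range of $(\alpha,p)$, and the bound is genuinely tightest — of order $2\eps$ rather than a constant — in the regime where $\alpha$ is large and $p\approx(1+\eps)/\alpha$, which is exactly why having a factor $8$ of slack is convenient. The reduction to the one‑variable inequality $x\ge(1+\eps)(1-e^{-x})$, via the two algebraic identities for $\lambda$ and the estimate $\lambda^{\alpha+1}\le e^{-(\alpha+1)(1-\lambda)}$, is the heart of the argument; the overshoot/undershoot at the two barriers and the fact that $\alpha$ need not be an integer are minor bookkeeping that does not affect anything.
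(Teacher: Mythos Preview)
Your proof is correct and takes a genuinely different route from the paper's.

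The paper reduces to the classical $\pm1$ asymmetric gambler's ruin formula (Fact~\ref{fact:gambler}). Since the step sizes here are $+\alpha$ and $-1$ with $\alpha$ possibly non-integer, the paper introduces a small parameter $\Delta$, approximates the walk by one with steps $\Delta a\approx\alpha$ and $-\Delta b\approx -1$, and then couples that to a $\pm\Delta$ walk with a carefully chosen bias $\vphi=\exp(-\eps/(4a))$; the bulk of the paper's argument is checking that this choice of $\vphi$ simultaneously satisfies $p\ge(1-\vphi^b)/(1-\vphi^{a+b})$ and $1-\vphi^a\ge\eps/8$, while managing the rounding errors in $a,b$ as $\Delta\to0$.

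You instead apply the exponential (Wald) martingale $\lambda^{Z_t}$ directly to the original walk, with $\lambda\in(0,1)$ the nontrivial root of $p\lambda^{\alpha+1}-\lambda+(1-p)=0$, and obtain the clean inequality $\PR{\tau_M\le\tau_{-\alpha}}\ge 1-\lambda^{\alpha}$ uniformly in $M$ via optional stopping. Your quantitative step, reducing the implicit equation for $\lambda$ to the scalar inequality $x\ge(1+\eps)(1-e^{-x})$ with $x=(\alpha+1)(1-\lambda)$ and then showing $x>\eps$, is elegant and avoids all discretization. The advantage of your approach is that non-integer $\alpha$ and the overshoot at the barriers are handled automatically (the inequality $Z_\sigma\le-\alpha\Rightarrow\lambda^{Z_\sigma}\ge\lambda^{-\alpha}$ goes the right way), so the argument is shorter and more robust. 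The paper's approach has the minor expository benefit of invoking a named textbook fact, but pays for it with the auxiliary walks $V_i,V_i',W_i$ and the $\Delta$-bookkeeping.
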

\begin{proof}
We will use the following fact (see, for example \cite{GS12}, Chapter 12.2):
\begin{fact}[Asymmetric Gambler's Ruin]\label{fact:gambler}
Let $Y_i$ be independent random variables so that $Y_i = +1$ with probability $\gamma$ and $-1$ otherwise.
Let $a, b \geq 0$ be integers, and let $\vphi := (1-\gamma)/\gamma$.
Then
\[ \PR{ \tau_a(Y_i) \leq \tau_{-b}(Y_i) } = \begin{cases} \frac{ 1 - \vphi^b }{1 - \vphi^{b+a} } & \gamma \neq 1/2 \\
\frac{b}{b + a} & \gamma = 1/2 \end{cases} \]
\end{fact}

Using this fact, we will replace our steps $Y_t$ with steps $W_t \in \{\pm \Delta\}$, so that the behavior is the same (or worse for us).  
The analysis is made a bit hairier because $\alpha$ may not be an integer; in the following, the reader may wish to assume that $\alpha \in \mathbb{Z}$ and to take $\Delta = 1$.  

Let $\Delta > 0$ be some small constant which we will choose below, and let $a = \left\lfloor \frac{ \alpha }{ \Delta } \right\rfloor$ and $b = \left\lceil \frac{ 1 }{\Delta} \right\rceil$.  In particular, 
\begin{equation}\label{eq:realclose}
\alpha - \Delta \leq \Delta a \leq \alpha \qquad \text{and} \qquad 1 \leq \Delta b \leq 1 + \Delta.
\end{equation}
(Notice that if $\alpha$ is an integer then we may just take $\Delta = 1, \alpha = a, b = 1$ and we have $\Delta a = \alpha$ and $\Delta b = 1$).
Let $V_i$ be independent random variables so that $V_i = \Delta a$ with probability $p$ and $V_i = -\Delta b$ with probability $1 - p$.  Thus, as $\Delta \to 0$, we have $\Delta a \to \alpha$, $\Delta b \to 1$, and the behavior of $V_i$ approximates that of $Y_i$.

Thus, for any $M$, the probability that $\sum_{i=1}^t Y_i$ reaches $M$ before reaching $-\alpha$ is
\begin{align*}
\PR{ \tau_M(Y_i) \leq \tau_{-\alpha}(Y_i) } %
&\geq
\PR{ \tau_M(V_i) \leq \tau_{-\alpha}(V_i) } \\%
&\geq 
\PR{ \tau_{M}(V_i) \leq \tau_{-\Delta a}(V_i) }
\end{align*}
using the fact that $V_i$ is weakly more ``positive-going'' than $Y_i$ (in that $\Delta a \geq \alpha$ and $- \Delta b \geq -1$), and the last line follows since $-\Delta a \geq -\alpha$.

Now let $W_t$ be independent random variables so that $W_t = \Delta$ with probability $\gamma$ and $-\Delta$ with probability $1 - \gamma$, so that
$\vphi = (1- \gamma)/\gamma$ (which will be chosen below) satisfies
\begin{equation}\label{eq:want}
p \geq \frac{ 1 - \vphi^b }{ 1 - \vphi^{a + b }}. 
\end{equation}
Then by Fact~\ref{fact:gambler}, the probability that $\sum_{i=1}^t W_t$ reaches $\Delta a$ before reaching $-\Delta b$ is $p' \leq p$. Then we may couple the random walk with steps $W_t$ to the random walk with steps $V'_t$ which are $\Delta a$ with probability $p'$ and $-\Delta b$ otherwise.

Now we have
\[ \PR{ \tau_M( V_i ) \leq \tau_{-\Delta a}(V_i) } \geq \PR{ \tau_M(V_i') \leq \tau_{-\Delta a}(V_i') } = \PR{ \tau_M(W_i) \leq \tau_{-\Delta a}(W_i) }. \]
Using Fact~\ref{fact:gambler} again, we see that the right hand side above is equal to
\begin{equation}\label{eq:fail}
 \frac{ 1 - \vphi^{a} }{ 1 - \vphi^{M/\Delta + a} }. 
\end{equation}

Thus, we would like to show that for small enough $\Delta$, there is a choice of $\vphi$ so that \eqref{eq:want} is satisfied, while \eqref{eq:fail} is bounded away from zero.

To that end, choose
\[ \vphi = \exp\inparen{ - \eps/(4a) }. \]
We first check that \eqref{eq:want} holds, using the assumption that $p \geq (1 + \eps)/\alpha$. 
First, let us assert that $\Delta$ is small enough so that
\[ \frac{1 + \Delta}{\alpha - \Delta} \leq \frac{1}{\alpha}\inparen{ 1 + 3 \Delta} \leq \frac{1 + \eps}{1 + \frac{3\eps}{4} }. \]
(Above, this is possible since we are assuming that $\alpha \geq 1$).
This implies that (using \eqref{eq:realclose})
\begin{equation}\label{eq:bagood}
 \frac{b}{a} = \frac{\Delta b}{\Delta a} \leq \frac{1 + \Delta }{\alpha - \Delta} \leq \frac{1}{\alpha}(1 + 3\Delta) \leq \frac{ 1 + \eps }{1 + 3\eps/4 }
\end{equation}
which we will use below.  Now,
we have
\begin{align*}
\frac{ 1 - \vphi^b }{ 1 - \vphi^{a + b} } &= 
\frac{ 1 - \exp( -\eps b /(4a) ) }{1 - \exp( - \eps/4 - \eps b/(4a) ) }\\
&\leq \frac{ \eps b /(4a) }{ \frac{\eps}{4}( 1 + b/a ) - \frac{\eps^2}{32}(1 + b/a)^2 } \\
&\leq \frac{b}{a} \inparen{ \frac{1}{1 - \frac{\eps}{8}(1 + b/a) }} \\
&\leq \frac{b}{a} \inparen{ 1 + \frac{3\eps}{4} } 
\end{align*}
where above we have used
the fact that $1 - x \leq \exp(-x) \leq 1 - x + x^2/2$, and that $b/a < 2$ by \eqref{eq:bagood}, and the fact that $\eps \leq 1/3$ in the final line.
Continuing, we conclude
\begin{align*}
\frac{ 1 - \vphi^b }{ 1 - \vphi^{a + b} } 
&\leq \frac{b}{a} \inparen{ 1 + \frac{3\eps}{4} }  \\
&\leq \frac{1}{\alpha}(1 + 3\Delta )\inparen{ 1 + \frac{ 3\eps }{4} } \\
&\leq \frac{1 + \eps}{\alpha} \\
&\leq p
\end{align*}
where above we have again used \eqref{eq:bagood}. 

Next we check that \eqref{eq:fail} is at least $\eps/8$.  To do this, notice that
\begin{align*}
 \frac{ 1 - \vphi^{a} }{ 1 - \vphi^{M/\Delta + a} }
&\geq 1 - \vphi^a \\
&= 1 - \exp(-\eps/4) \\
&\geq \frac{\eps}{8},
\end{align*}
as desired, using the assumption that $\eps \leq 1/3$ in the last line.

Thus, putting everything together, we conclude that
\begin{align*}
\PR{ \tau_M(Y_i) \leq \tau_{-\alpha}(Y_i) }
&\geq
\PR{ \tau_{M}(V_i') \leq \tau_{-\Delta a}(V_i') } \\
&= \PR{ \tau_M(W_i) \leq \tau_{-\Delta a}(W_i) }\\
&=
 \frac{ 1 - \vphi^{a} }{ 1 - \vphi^{M/\Delta + a} } \\ 
&\geq \frac{\eps}{8}
\end{align*}
which proves the claim.
\end{proof}

Using the claim and \eqref{eq:randomwalk}, we have that
 \[ \PR{ \exists t > 0 \text{ s.t. } |S_t| \leq \beta n + 1 , |U_t| = 0 } \leq  1 - \eps/8. \]

Finally, 
\begin{align*}
\PR{ |C_{(u,v)}| \geq \beta n }
&= \PR{ |C_{(u,v)}| \geq \beta n \, \mid \, (u,v) \in E' } \cdot \PR{ (u,v) \in E' } \\
&= p \cdot \inparen{ 1 - \PR{ \exists t \text{ s.t. } |S_t| \leq \beta n + 1 , |U_t| = 0}}\\
&\geq \frac{p \eps }{8}. 
\end{align*}

This completes the proof.

\end{proof}

\section{Empirical Results}\label{sec:empirical}
In this section, we numerically compare Algorithm~\ref{alg:make-tests} to existing work. We compare to the random walk based approach of \cite{Cheraghchi:2010vm} and to the randomized group testing \em without \em graph constraints of Proposition~\ref{prop:gt-upper}.
We find that the random subgraph tests perform nearly as well as the unconstrained versions in most settings, and often perform significantly better than the random walk approach.

Below, we test the following randomized constructions of tests:
\begin{itemize}
\item Our approach, Algorithm~\ref{alg:make-tests} (called ``Subgraph'' in the figures).  We use $p=1/(d+1)$ and include only the largest connected component of $G(p)$. 
\item The random walk approach of \cite{Cheraghchi:2010vm} (called ``Random walks'' in the figures).  We empirically estimate the mixing time $\tau$ by picking a node at random and finding the first time that the total variation distance to the equilibrium distribution is less than $1/(2cn)^2$, as per the definition in \cite{Cheraghchi:2010vm}, where $c$ is defined so that the graph $G=(V,E)$ has $D \leq \deg(v) \leq cD$ for each $v \in V$.  We fix a constant $l > 0$ and run each random walk for $\ceil{\frac{lnD}{c^3d\tau}}$ steps. We tried a few of values of $l$ and present the best results for each graph: for the complete graph we chose $l=1$ and for all other graphs we chose $l=4$.
\item Unconstrained random approach (called ``Random'' in the figures).  We include each edge in a test with probability $p = 1/(d+1)$, and ignore the graph constraints. 
\end{itemize}
We consider four types of graphs.  The first three---random regular graphs, complete graphs, and hypercubes---are idealized graphs that may or may not capture real networks.  For our last graph, we choose the ``Fat-Tree'' graph~\cite{L85}, originally designed for use in supercomputers and which is now widely used in datacenter networks~\cite{AlFares:2008tv}.  As the name suggests, this is a ``fattened'' tree, where the fatness (number of links) near the top of the tree is greater than the fatness near the leaves. (See Figure~\ref{fig:fattree}).

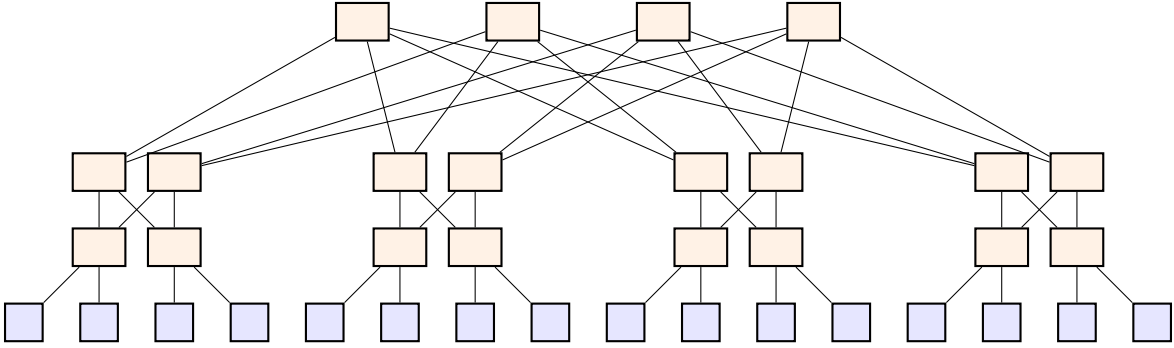
\begin{figure}
\begin{center}
\begin{tikzpicture}
\tikzstyle{client}=[draw,thick, fill=blue!10, rectangle, minimum height=.5cm, minimum width=.5cm]
\tikzstyle{switch}=[draw,thick, fill=orange!10, rectangle, minimum height=.5cm, minimum width=.7cm]

\node[switch](A) at (-3,0) {};
\node[switch](B) at (-1, 0) {};
\node[switch](C) at (1, 0) {};
\node[switch](D) at (3,0) {};

\foreach \shift in {-7.5, -3.5, .5, 4.5}{
\begin{scope}[xshift=\shift cm, yshift=-4cm]
\node[client](a1) at (0,0) {};
\node[client](b1) at (1,0) {};
\node[client](c1) at (2,0) {};
\node[client](d1) at (3,0) {};

\node[switch](x1) at (1,1) {};
\node[switch](z1) at (2,1) {};
\node[switch](y1) at (1,2) {};
\node[switch](w1) at (2,2) {};

\draw (a1) to (x1);
\draw (b1) to (x1);
\draw (c1) to (z1);
\draw (d1) to (z1);

\draw (x1) to (y1);
\draw (x1) to (w1);
\draw (z1) to (w1);
\draw (z1) to (y1);

\draw (y1) to (A);
\draw (y1) to (B);
\draw (w1) to (C);
\draw (w1) to (D);
\end{scope}
}
\end{tikzpicture}
\end{center}
\caption{A small example of the fat tree topology with $n=36,m=48$.  In our experiments, we consider larger versions ($n=80,m=256$ and $n=45,m=108$).}\label{fig:fattree}
\end{figure}

We perform two types of experiments:
\begin{enumerate}
\item   In the first type of experiment, we compare the probability of obtaining a $d$-disjunct matrix from any of these three randomized approaches.  
Unfortunately, it is computationally intense to determine whether or not a given collection $\mathcal{T}$ of tests is $d$-disjunct, and so we are only able to do this for small $d$ ($d=1$ and $d=2$). 
\item In the second type of experiment, we are trying to understand the performance of our method for larger $d$.
Since determining $d$-disjunctness is computationally infeasible for large $d$, instead
we choose $d$ random defectives and estimate the probability of success under each of the three methods.  As mentioned in Remark~\ref{remark:random}, we believe that our theoretical approach should also work for random defectives, although the primary goal of these experiments is to get an idea of how our approach might work in practice.
\end{enumerate}

\subsection{Tests for $d$-disjunctness} 

First, we estimate the number of tests required for $d$-disjunctness for Algorithm~\ref{alg:make-tests} and compare it to \cite{Cheraghchi:2010vm} and randomized group testing without graph constraints. We find that for many graphs, our approach requires roughly the same number of tests as group testing without constraints.

Figure~\ref{fig:p-1-disjunct} (resp. Figure~\ref{fig:p-2-disjunct}) shows the probability that a randomly generated test matrix is 1-disjunct (resp. 2-disjunct) for various graphs, algorithms, and numbers of tests. Each point is the empirical mean of $100$ independent trials, and we plot error bars of width $0.1 = 1/\sqrt{100}$.  (Notice that by Hoeffding's inequality, the probability that the true average lies outside the error bars is at most $1/e^2$).

Algorithm~\ref{alg:make-tests} performs similarly to the nearly optimal randomized group testing procedure of Proposition~\ref{prop:gt-upper}. Notably, for the Fat-Tree graph, Algorithm~\ref{alg:make-tests} significantly outperforms the approach of \cite{Cheraghchi:2010vm}.

\begin{figure}[H]
  \centering

  \newcommand{\empiricalgraph}[2]{%
    \begin{subfigure}[b]{0.49\textwidth}%
      \includegraphics[width=\textwidth]{figures/#1}%
      \caption*{#2}%
    \end{subfigure}}

  \empiricalgraph{p-d-disjunct-regular-1.pdf}{Random Regular Graph, $n=100, m=500$}
  \empiricalgraph{p-d-disjunct-complete-1-l=1.pdf}{Complete Graph, $n=23, m=253$}
  \empiricalgraph{p-d-disjunct-fat-tree-1.pdf}{Fat Tree, $n=80, m=256$}
  \empiricalgraph{p-d-disjunct-hypercube-1.pdf}{Hypercube, $n=64, m=192$}
  \caption{Probability that a randomly generated test matrix with a certain number of tests is 1-disjunct for various graphs. Each point is the mean of 100 trials, and error bars of 0.1 are plotted. ``Subgraph'' is our approach, ``Random Walk'' the approach of \cite{Cheraghchi:2010vm}, ``Random'' the nearly-optimal randomized construction for unconstrained group testing.}
  \label{fig:p-1-disjunct}
\end{figure}

\begin{figure}[H]
  \centering

  \newcommand{\empiricalgraph}[2]{%
    \begin{subfigure}[b]{0.49\textwidth}%
      \includegraphics[width=\textwidth]{figures/#1}%
      \caption*{#2}%
    \end{subfigure}}

  \empiricalgraph{p-d-disjunct-regular-sm-2.pdf}{Random Regular Graph, $n=10, m=25$}
  \empiricalgraph{p-d-disjunct-complete-sm-2-l=1.pdf}{Complete Graph, $n=7, m=21$}
  \empiricalgraph{p-d-disjunct-fat-tree-sm-2.pdf}{Fat Tree, $n=45, m=108$}
  \empiricalgraph{p-d-disjunct-hypercube-sm-2.pdf}{Hypercube, $n=32, m=80$}
  \caption{Probability that a randomly generated test matrix with a certain number of tests is 2-disjunct for various graphs. Each point is the mean of 100 trials, and error bars of 0.1 are plotted. ``Subgraph'' is our approach, ``Random Walk'' the approach of \cite{Cheraghchi:2010vm}, ``Random'' the nearly-optimal randomized construction for unconstrained group-testing.}
  \label{fig:p-2-disjunct}
\end{figure}

\subsection{Tests on $d$ random failures}

As mentioned above, determining $d$-disjunctness is computationally infeasible for larger $d$, and so to assess larger $d$ we consider performance on random failures.  This model has been considered both in group testing (e.g. \cite{M16,LPR16,HKWO18}) and boolean network tomography \cite{Duffield:2006ue,Nguyen:2007ds}.  It is not hard to see that without graph constraints, including each edge with probability $p = 1/(d+1)$ will, with high probability over both the tests and the failures, identify up to $d$ random failures with $O(d\log(m))$ tests.  As noted in Remark~\ref{remark:random}, we believe our approach can provably achieve similar results.

For our experiments with random failures, we focus on the fat-tree topology.  The main reasons for this are (a) that the ``Fat-Tree with random failures'' set-up is perhaps the most relevant for real-life applications, and (b) the other topologies yield graphs that look similar, but the differences between the three approaches are less pronounced.  

We find that Algorithm~\ref{alg:make-tests} significantly out-performs the random walk approach of \cite{Cheraghchi:2010vm}, but performs less well as $d$ grows. In this graph once $d$ becomes much larger than 5, even the random group testing construction without graph constraints requires at least $m$ tests.

Figure~\ref{fig:p-correct} shows the probability that a set of tests correctly identifies $d$ random failures, where the probability is taken over both the tests and the failures.  
Each point is the empirical mean of 100 independent trials, and we plot error bars of width $0.1 = 1/\sqrt{100}$.  As above, by a Hoeffding bound the probability that the true mean lies outside the error bars is at most $e^{-2}$.

\begin{figure}[H]
  \centering

  \newcommand{\empiricalgraph}[2]{%
    \begin{subfigure}[b]{0.49\textwidth}%
      \includegraphics[width=\textwidth]{figures/#1}%
      \caption*{#2}%
    \end{subfigure}}

  \empiricalgraph{p-correct-fat-tree-1.pdf}{Fat Tree, $n=80, m=256, d=1$}
  \empiricalgraph{p-correct-fat-tree-2.pdf}{Fat Tree, $n=80, m=256, d=2$}
  \empiricalgraph{p-correct-fat-tree-3.pdf}{Fat Tree, $n=80, m=256, d=3$}
  \empiricalgraph{p-correct-fat-tree-5.pdf}{Fat Tree, $n=80, m=256, d=5$}
  \caption{The probability that tests generated from various schemes  with a certain number of tests correctly identifies $d$ random failed edges. ``Subgraph'' is our approach, ``Random Walk'' the approach of \cite{Cheraghchi:2010vm}, ``Random'' the nearly-optimal randomized construction for unconstrained group testing.}
  \label{fig:p-correct}
\end{figure}

\section{Conclusion}\label{sec:conclusion}
We have given a simple randomized construction which shows that for many graphs, graph-constrained group testing is possible with a near-optimal number of tests.  Our results---which are proved by analyzing a particular random walk---improve over previous work, and also apply to a wider range of graphs.  However, many open questions remain, and we conclude with a few of these here.

\begin{enumerate}
\item Both our approach and the approach of \cite{Cheraghchi:2010vm} give randomized constructions.  Derandomizing these constructions remains a fascinating open question.  Such a derandomization would be especially useful if it allowed a node to extremely efficiently determine which of its neighbors a test packet should be sent to next, using only minimal information stored in the packet.
\item While $(\beta, \alpha)$-expansion is reasonably general, it is not completely general.  For example, hypercubes are not very good $(\beta, \alpha)$-expanders, but the result of \cite{Harvey:2007ez} implies that (since they have many disjoint spanning trees) hypercubes are reasonably good for the graph-constrained group-testing problem: $O(d^3 \log_d(m))$ tests suffice to identify $d$ defectives for $d \lesssim \log(n)$.
It seems possible that one could modify our analysis using the approach of \cite{Ajtai:1982jx}---which shows that random sparsifications of hypercubes have large connected components with high probability---to obtain a good result for hypercubes as well.
Thus, it is an open question to see how well our approach works for hypercubes, but more generally if there is some quantity (more general than $(\beta, \alpha)$-edge expansion) which precisely captures when our approach works and when it does not.
\item In Appendix~\ref{app:separation}, we show that simple paths are not as powerful as connected-subgraph tests for graph-constrained group testing.  However, in practice it is often the case that simple paths (and especially shortest paths) are easier to implement.  It would be interesting to characterize the limitations of graph-constrained group testing when the tests are restricted to (shortest) simple paths.
\end{enumerate}

\section*{Acknowledgements}
We thank Cl\'{e}ment Canonne and Nick McKeown for helpful comments.

\bibliographystyle{alpha}
\bibliography{existing.bib}

\appendix
\section{The power of connected-subgraph tests.}\label{app:separation}
In this appendix, we briefly observe that connected subgraph tests are strictly more powerful for graph-constrained group testing than tests which are constrained either to be simple paths or trees.

First we observe a separation between simple-path tests and tree tests.  
Let $G$ be a balanced binary tree on $n$ nodes, and let $d=1$.
It is not hard to see that any collection $\mathcal{T}$ of simple path tests which are $1$-disjunct must have $|\mathcal{T}| = \Omega(n)$, because that many tests are required simply to cover $G$.  
However, Theorem 7 of \cite{Harvey:2007ez} shows that $O(\log^2(n))$ tree tests suffice to solve the graph-constrained group testing problem. 

Next we observe a separation between tree tests and connected-subgraph tests.
Let $G$ be the complete graph on $n$ vertices.  Any $d$-disjunct collection $\mathcal{T}$ of tree tests must have size $\Omega(n)$, again because $\Omega(n)$ trees are required to cover all the edges in the complete graph.
However \cite{Cheraghchi:2010vm} show that $O(d^2 \log(n/d))$ connected-subgraph tests suffice for the complete graph.

\end{document}